\documentclass[review,onefignum,onetabnum]{siamart190516}

\usepackage{amssymb}



\usepackage{amsfonts}
\usepackage{graphicx}
\usepackage{epstopdf}
\usepackage{algorithmic}
\usepackage{graphicx,color}
\usepackage[section]{placeins}
\usepackage{subfig}
\usepackage{epsfig}
\usepackage{amsmath}
\usepackage{hyperref}
\usepackage{verbatim}
\usepackage{chngcntr}

\usepackage{tikz}
\usetikzlibrary{intersections}
\usetikzlibrary{calc}
\usetikzlibrary{positioning}
\usetikzlibrary{backgrounds}
\usetikzlibrary{decorations.markings}
\pgfdeclarelayer{nodelayer}
\pgfdeclarelayer{edgelayer}
\pgfsetlayers{edgelayer,nodelayer,main}

\definecolor{rvwvcq}{rgb}{0.08235294117647059,0.396078431372549,0.7529411764705882}
\definecolor{qqqqff}{rgb}{0,0,1}
\definecolor{ffqqqq}{rgb}{1,0,0}

\ifpdf
  \DeclareGraphicsExtensions{.eps,.pdf,.png,.jpg}
\else
  \DeclareGraphicsExtensions{.eps}
\fi


\newsiamremark{remark}{Remark}
\newsiamremark{hypothesis}{Hypothesis}
\crefname{hypothesis}{Hypothesis}{Hypotheses}
\newsiamthm{claim}{Claim}

\headers{A Graphical Representation of Membrane Filtration}{B. Gu, L. Kondic, and Linda J. Cummings}

\title{A Graphical Representation of Membrane Filtration\thanks{Submitted to the editors 06/07/2021.
\funding{This work was funded by NSF Grant No. DMS-1615719.}}}

\author{Binan Gu\thanks{Department of Mathematical Sciences, New Jersey Institute of Technology, Newark, NJ
  (\email{bg263@njit.edu}, \url{https://web.njit.edu/\~bg263/}).}
\and Lou Kondic\footnotemark[2]
\and Linda J. Cummings\footnotemark[2]
  }

\usepackage{amsopn}


\ifpdf
\hypersetup{
  pdftitle={A Graphical Representation of Membrane Filtration},
  pdfauthor={B. Gu, L. Kondic, and L. J. Cummings}
}
\fi


\externaldocument{ex_supplement_revised}


\begin{document}
\nolinenumbers
\maketitle

\begin{abstract}
  We analyze the performance of membrane filters represented by pore networks using two criteria: 1) total volumetric throughput of filtrate over the filter lifetime and 2) accumulated foulant concentration in the filtrate. We first formulate the governing equations of fluid flow on a general network, and we model transport and adsorption of particles (foulants) within the network by imposing an advection equation with a sink term on each pore (edge) as well as conservation of fluid and foulant volumetric flow rates at each pore junction (network vertex).  Such a setup yields a system of partial differential equations on the network. We study the influence of three geometric network parameters on filter performance: 1) average number of neighbors of each vertex; 2) initial total void volume of the pore network; and 3) tortuosity of the network. We find that total volumetric throughput depends more strongly on the initial void volume than on average number of neighbors.  Tortuosity, however, turns out to be a universal parameter, leading to almost perfect collapse of all results for a variety of different network architectures. In particular, the accumulated foulant concentration in the filtrate shows an exponential decay as tortuosity increases.
\end{abstract}

\begin{keywords}
  Fluid mechanics, filtration, networks, graph theory and discrete calculus
\end{keywords}

\begin{AMS}
  76S05, 35R02, 94C15, 35E15
\end{AMS}

\section{Introduction}
In many real membrane filters, cavities in the membrane material (pore junctions) are connected by capillaries or channels (pore throats) of length significantly larger than their radius (see \cref{fig:schematic}). Such slender geometry allows for simplifications of the equations governing filtration. At the same time, the membrane pore structure can be viewed as a complex network with vertices and edges, which represent the pore junctions and pore throats respectively \cite{martinez,3d_image}. A fluid feed with foulants is then driven across the membrane through these channels by the transmembrane pressure, while foulant is deposited on the pore walls.

There is a broad range of studies involving dynamics on networks, such as models of the cardiovascular network \cite{changroper}, chemical reaction networks \cite{chem} and optimal control in networks of pipes and canals \cite{opt_control}, among many others. Accordingly, there are various graph models that may be used to represent complex networks, including pore networks in porous media. One such setup is the random geometric graph (RGG) model that has garnered significant attention from theoreticians~\cite{rgg_penrose,rgg_dall}, 
as well as from applied scientists working on applications such as sensor networks \cite{cluster}, social science \cite{social,social2,social3,social4}, neuroscience \cite{neuro} and, most relevant to our work, filtration modeling \cite{rgg_2011_thiedmann,griffith_jms}. The RGG setup involves generating a specified number of random points, uniformly distributed within a specified domain, and then connecting any pair of points that are closer than some search radius (according to the chosen metric). 

Within the specific RGG setup used by Griffiths {\it et al.}~\cite{griffith_jms} for filtration modeling, the chosen domain is a unit cube (normalized on the membrane thickness and representing a portion of the filter membrane); the metric is the usual 3D Euclidean distance; and to model top inlets and bottom outlets (see \cref{fig:schematic_a}) random points are uniformly distributed on the top and the bottom surfaces of the cube. This setup requires four parameters: the number of interior points; the search radius; the number of top inlets; and the number of bottom outlets. Furthermore, the fluid flow is modelled by the Hagen-Poiseuille equations, whose validity relies on the physical assumption that the edges have a small aspect ratio (radius vs. length), prompting an additional constraint on the network generation. 

Another important aspect of modeling the dynamics of filtration on the membrane network is fouling. Membrane fouling occurs when contaminants transported by the fluid feed within the pore network become trapped within it. There are three primary fouling mechanisms in membrane filtration: 1) adsorption; 2) sieving and 3) cake layer formation. Adsorption occurs when small particulate foulants are deposited on the pore walls due to physical or chemical interactions with membrane material (or with existing particle deposits). Sieving involves fouling particles of size comparable to the pore size that may partially cover or completely block upstream pore entries or internal junctions. Cake formation occurs during the later stages of filtration, where fouling particles are packed against each other inside the pore throats or on top of the membrane surface, further restricting fluid flow.

In this work, we consider adsorptive fouling only, modeling two main features. First, the fouling particles (referred to simply as particles, henceforth) deposit on pore walls as they are advected by fluid flow, thereby reducing foulant concentration in the feed as it traverses the network. Second, pore radius decreases due to the particle deposition on the pore wall. Within a continuum model for the particle concentration, the first effect can be captured by an appropriate sink term in the advection equation for the particle concentration, while the second may be modeled by an evolution equation for the pore radius. 

Classical models of particle advective transport on graphs were developed and studied by Chapman \& Mesbahi \cite{chapman_mesbahi}, but these authors did not incorporate a sink term to capture external effects such as fouling. Meanwhile, Gu {\it et al.} \cite{gu2020} considered a coarse discretization of a transport equation with deposition on regular layered pore structures with interconnections, which can be generalised to more complex networks (represented by graphs). In this work, we combine these two approaches and formulate a transport equation on the network using the graph theoretical framework. 
\begin{figure}[tbhp]
\centering
\subfloat[Experimental Image]{\label{fig:schematic_a}\includegraphics[width=.43\textwidth]{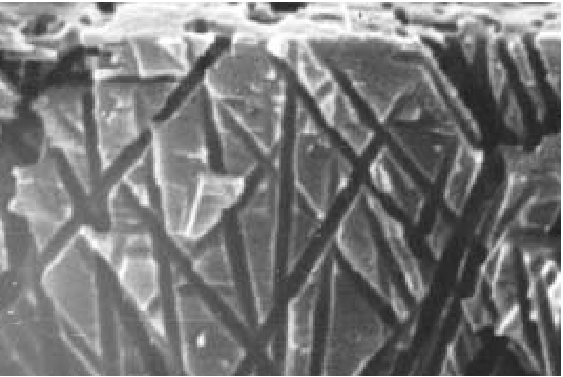}}
\subfloat[Inlets and Interior Junctions]{\label{fig:schematic_b}\includegraphics[width=.43\textwidth]{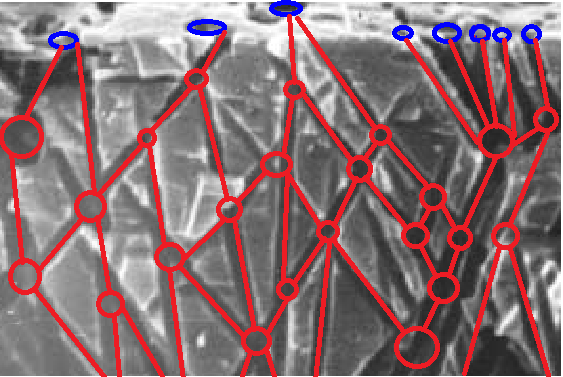}}\\
\caption{Schematic: (a) an experimental image with lateral view of a filter cross-section \cite{Apel2001}; (b) a corresponding (partial) graph representation with inlets on the top surface ({\color{blue}blue}) and interior pore junctions and throats ({\color{red}red}).} 
\label{fig:schematic}
\end{figure}

Our principal contributions are to develop and solve a fluid flow and fouling problem on a graph representing the pore network within a membrane filter, by constructing the associated operators based on the underlying graph; and to use our model to identify important correlations between measurable network properties and key filtration performance metrics. The novel random graph generation technique we use achieves two primary goals: the constructed network is such that physical assumptions for the Hagen-Poiseuille fluid flow model remain valid; and the overall number of model parameters is minimized. Our model is described in detail in \cref{sec:2}: in \cref{sec:2.1}, we introduce the aspects of graph theory relevant to the present problem; in \cref{sec:2.2} we present a specific graph construction that represents a membrane network; in \cref{sec:2.3}, we describe the fluid flow in an arbitrary pore (edge of the graph), before defining the necessary graph operators and associated function spaces on graphs in \cref{sec:2.4}. With these building-blocks we are then able to set up the governing equations for fluid flow and foulant transport on a graph in \cref{sec:2.5,sec:2.6} respectively. In \cref{sec:2.7}, we define the metrics that will be used to characterize filter performance. In \cref{sec:3} we introduce appropriate scalings to nondimensionalize the model and in \cref{sec:4} we outline the algorithm we use to solve it. \Cref{sec:5} contains our results, presented in the context of the performance metrics defined in \cref{sec:2.7}. In \cref{sec:6} we discuss our findings and summarize conclusions.

\section{Modeling}\label{sec:2}
In this section, we describe and construct a graph that models a membrane filter pore network, and set up the governing equations of fluid flow and foulant transport through the network. We then introduce the performance measures associated with our membrane filter, which we will use to characterize filtration effectiveness. We refer the reader to \cref{table} for all notation used in this paper.
\begin{table}
\centering
\resizebox{\textwidth}{!}{
\begin{tabular}{llll}
$v_i$     & $i$-th vertex                   & $e_{ij}$                & An edge connecting vertices $v_i$ and $v_j$                     \\
$V$     & Set of all vertices                   & $E$                & Set of all edges                     \\
$V_{\rm top}$     & Set of vertices on the top membrane surface                  & $V_{\rm bot}$                & Set of vertices at the bottom membrane surface                    \\
$\mathbf{L_W}$     & (Weighted) Graph Laplacian of $G$                  & $\mathbf{W}$                & Weighted adjacency matrix of $G$        \\
$\mathbf{M}$     & Incidence matrix of $G$                      & $\mathbf{M}^{\rm T}$                  & Incidence matrix transpose                              \\
$\boldsymbol{P}\left(v_i\right)$     & Pressure at the vertex $v_i$                   & $\boldsymbol{C}_i$     & Concentration in vertex $v_i$ (unit of number per volume)              \\
${P}_0$     & Pressure at membrane top surface                   & 
${C}_0$     & Foulant concentration in the feed solution \\
$\mathbf{K}_{ij}$       & Conductance of edge $e_{ij}$                    & $\mathbf{Q}_{ij}$             & Flux through edge $e_{ij}$ \\ 
$R_{ij}$     & Radius of edge $e_{ij}$                    & $A_{ij}$                & Length of edge $e_{ij}$                               \\
$W$     & Side length of the square cross-section & $\chi\left(\cdot,\cdot\right)$ & Metric on $V$ \\
$D$     & Search radius                   & $N_{\rm total}$                & Total number of vertices                     \\
$D_{\rm min}$     & Minimal edge length                   & $\tau$                & Tortuosity                     \\
$\mu$     & Fluid viscosity & $\alpha$ & Parameter related to particle volume \\
$\Lambda$     & Affinity between the foulant and the pore wall (unit of velocity) & $N$ & Average number of neighbors for a graph $G$\\
$H$     & Throughput & $C_{\rm acm}$ & Accumulated foulant concentration at membrane outlet \\
 \hline
          &                                                &                      &                                                        
\end{tabular}
}
\caption{Key nomenclature used throughout this work.}
\label{table}
\end{table}

\subsection{Graph Theoretical Setup}\label{sec:2.1}
We consider a membrane filter as a slab of porous material that consists of an upstream (top) surface containing pore inlets, a downstream (bottom) surface containing pore outlets, and a porous interior comprising membrane material as the solid region and pore junctions and throats as the void region. We restrict attention to a cubic sub-region of the membrane slab in which the top and bottom surfaces are squares with side length $W$ separated by distance $W$ (the membrane depth). The pore throats in the interior of this sub-region are assumed to be circular cylinders connected at pore junctions, forming a network. We model this interior network of pores by a graph $G$ consisting of a set of vertices $V$ (the pore junctions) and edges $E$ (the pore throats; see \cref{fig:schematic}), more compactly written as $G=\left(V,E\right)$. The set of vertices $V$ consists of pore inlets on the membrane top (upstream) surface, pore junctions in the interior of the membrane material where edges (pores) meet, and pore outlets at the downstream surface of the filter. Each edge $e\in E$ represents a pore throat that connects two vertices. From a fluid dynamics perspective the graph forms a flow network,
with fluid flowing along the edges and through the vertices. Each edge of the graph is associated with a weight to be specified depending on the context; in our model the weights are associated with the fluid flux through the edge. In constructing $G$ we begin with an undirected graph, with directionality to be imparted by the direction of fluid flow (once defined). 

Each vertex $v\in V$ is associated with a Euclidean position vector $\boldsymbol{X}\left(v\right) = \left(X_1\left(v\right),X_2\left(v\right),X_3\left(v\right)\right)\in \mathbb{R}^3,$ where coordinates $X_1$, $X_2$ lie in the plane of the membrane, while $X_3$ is measured perpendicular to the membrane in the direction of flow.

\begin{definition} (Vertex set)
Let $X_3\left(v\right)$ be the depth of the vertex $v$, measured by the shortest distance from the vertex to the membrane inlet surface. Let $V_{\rm{top}},V_{\rm{bot}}\subset V$ be the set of vertices that lie in the top and bottom membrane surfaces respectively, 
\[
V_{{\rm top}} =\left\{ v\in V:X_{3}\left(v\right)=0\right\},\quad V_{{\rm bot}} =\left\{ v\in V:X_{3}\left(v\right)=W\right\} 
\]
where $W$ is the depth of the membrane. The set of vertices in the interior of the membrane is given by $V_{\rm int} = V\backslash \left(V_{\rm top}\cup V_{\rm bot}\right)$.
\end{definition}

We further assume that the graph is simple, i.e. no vertex is connected to itself, equivalent to the assumption that fluid always flows out of a pore junction.
\begin{definition}\label{def:edges} An edge $e_{ij}\in E \subseteq V\times V$ is a pair $\left(v_i,v_j\right)$, with $i\neq j$ whose existence is governed by a connection law (to be described). We do not distinguish between the order of the pair, namely, $e_{ij}=e_{ji}$.
\end{definition}

\subsection{Graph Generation for a Membrane Network}\label{sec:2.2}
Here we describe a random graph model in the specific context of membrane filtration by clarifying the constituents of $V_{\rm top}$, $V_{\rm bot}$, $V_{\rm int}$ and $E$, using a variant of the random geometric graph. 

In the interior of a rectangular box $\Omega$ with square cross-sections of side length $W$ and height $2W$, $\Omega := \left[0,W\right]\times\left[0,W\right]\times\left[-\frac{W}{2},\frac{3W}{2}\right]$, we generate independent uniformly distributed random points in each cartesian coordinate, from which our set of vertices will be taken. To set up a connection rule for vertices in order to define edges, we first introduce a physically-motivated assumption. Earlier work such as that of Iritani~\cite{iritani} and Siddiqui {\it et al.}~\cite{perm2016} suggests that pore throats in certain types of membrane filters are approximately cylindrical and slender, with $R_{ij}/A_{ij} \ll 1$, where $R_{ij}$ and $A_{ij}$ are the radius and length respectively of the edge (pore throat) $e_{ij}$ connecting vertices (pore junctions) $v_i$ and $v_j$. We assume our model membrane filter to be of this type, justifying the use of the Hagen-Poiseuille framework~\cite{probstein} to model steady laminar flow of fluid, assumed to be Newtonian with viscosity $\mu$, within each edge (details of the flow model are in \cref{sec:2.5} later). In all simulations we present, the initial edge radii are the same for all $i,j$ (though they will shrink at different rates under subsequent fouling depending on local particle concentration), thus validity of the Hagen-Poiseuille model requires that the length of each edge must exceed a certain threshold, $D_{\rm min}$. Meanwhile, to tune vertex connectivity, we prescribe a parameter $D$ as the maximum possible edge length. More precisely, we define the edge set,
\begin{equation}
    E=\left\{ \left(v,w\right)\in V\times V:D_{\min}<\chi\left(v,w\right)<D\right\},
    \label{edge}
\end{equation}
where $\chi\left(\cdot,\cdot\right)$ is the metric on our graph.
In this paper, we investigate and compare two distinct metrics: 1) $\Omega$ is treated as an isolated domain, with no pores entering or leaving through the four sides parallel to the $X_3$-direction, representing an isolated sub-unit of the whole membrane ({\it isolated} case) and 2) $\Omega$ is treated as a periodic domain such that any pore exiting through one of these four sides re-enters on the opposite side ({\it periodic} case). In each of these two cases the metric is defined by
\begin{align}
\chi\left(v,w\right) & =\begin{cases}
\left\Vert \boldsymbol{X}\left(v\right)-\boldsymbol{X}\left(w\right)\right\Vert _{2}, & {\rm isolated}\\
\min_{\boldsymbol{z}}\left\Vert \boldsymbol{X}\left(v\right)-\boldsymbol{X}\left(w\right)-\left(\boldsymbol{z},0\right)\mid\boldsymbol{z}=\left\{ -W,0,W\right\} ^{2}\right\Vert _{2}, \,\, D \leq \frac{W}{2}, & {\rm periodic}
\end{cases}
\label{metric}
\end{align}
where we note that the periodic metric wraps points around the boundaries in the $X_1$ and $X_2$ directions. This metric effectively determines the pore lengths $A_{ij} = \chi\left(v_i,v_j\right)$. All scenarios considered in this paper have search radius $D\leq \frac{W}{2}$, so we avoid the complication of having pores connect to each other twice (in the periodic case). 

\begin{figure}[!ht]
    \centering
    \subfloat[2D schematic of the 3D graph generation]{\label{2d_schematic}\includegraphics[scale = 0.3]{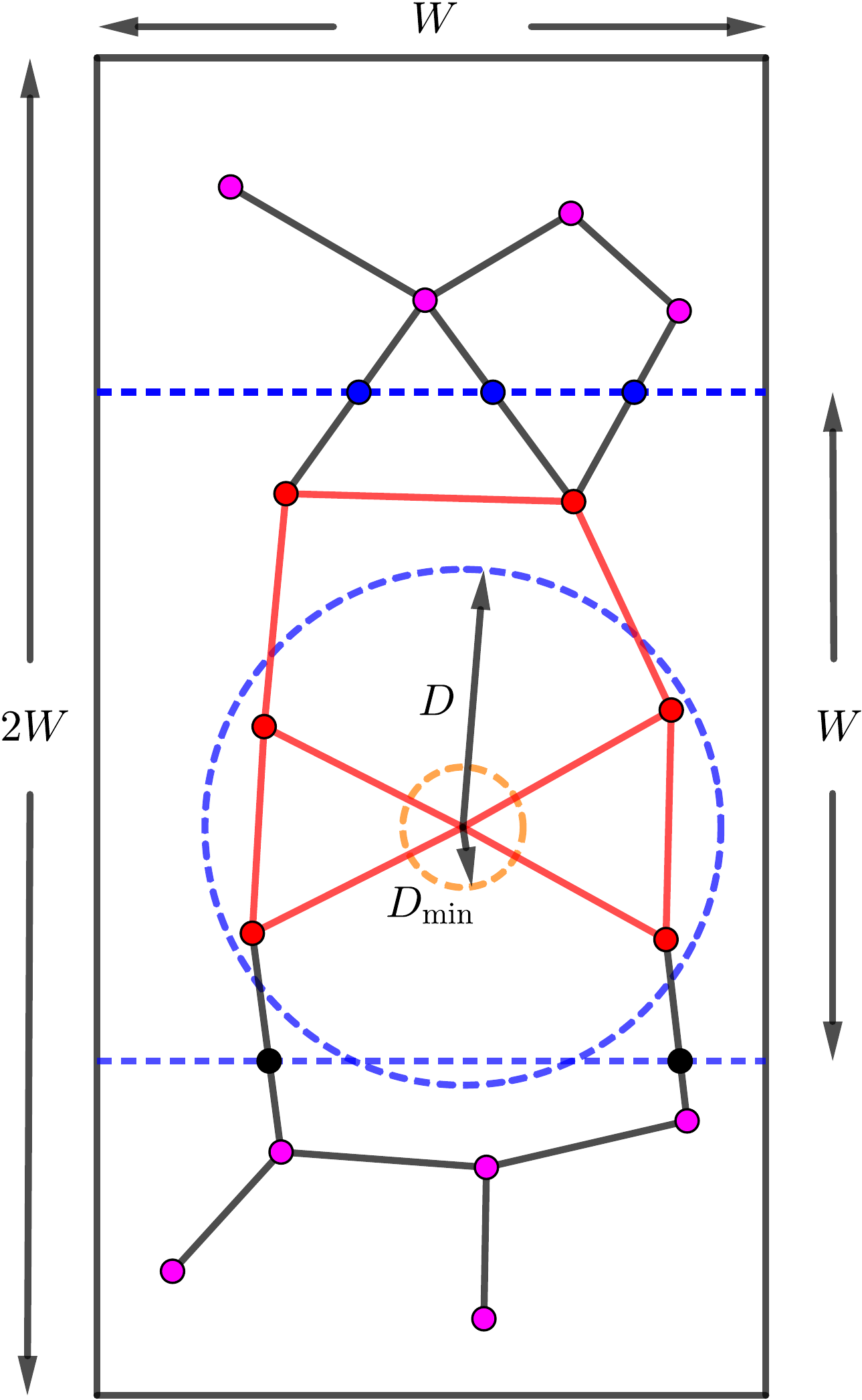}}
    \qquad
    \subfloat[3D realization with the periodic metric]{\label{3d_schematic}\includegraphics[scale = 0.55]{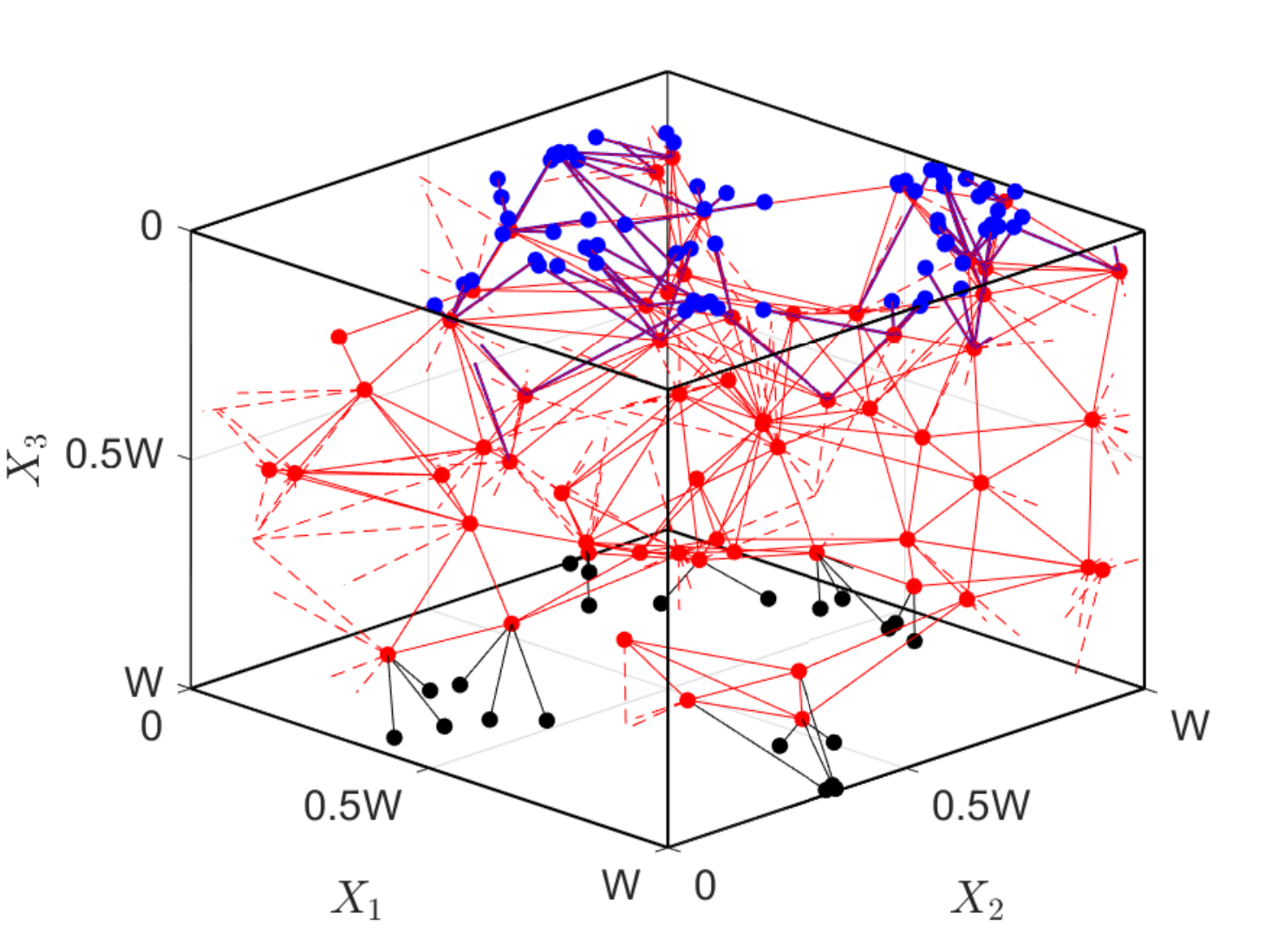}}
    \caption{Labels for (a) and (b): {\color{red}red filled circles} form $V_{\rm int}$, {\color{blue}blue filled circles} form induced inlets $V_{\rm top}$, and black filled circles form induced outlets $V_{\rm bot}$. In (a), {\color{blue}blue dotted lines} are cutting lines (planes in 3D), and {\color{magenta}magenta circles} are discarded points. The {\color{blue}blue} and {\color{orange}orange} dotted circles form the search annulus enforced by $D$ and $D_{\rm min}$. In (b), periodicity is enforced only through the four interior walls per \cref{metric}.}
    \label{fig:3d_bdy_gen}
\end{figure}
To generate the set of upstream (top) inlets $V_{\rm top}$ and downstream (bottom) outlets $V_{\rm bot}$, we cut $\Omega$ with two horizontal planes, $X_3=0$ (upstream cutting plane) and $X_3=W$ (downstream cutting plane). The points of intersection of these planes and the edges defined by \cref{edge} form the set of pore inlets $V_{\rm top}$ and outlets $V_{\rm bot}$ respectively (see \cref{fig:3d_bdy_gen}). All points between the planes $X_3=0,W$ form the set of interior vertices $V_{\rm int}$. This cutting procedure effectively eliminates two parameters (number of inlets and outlets) from the RGG setup proposed by Griffiths {\it et al.}~\cite{griffith_jms}. We also impose two geometric constraints: 1) total volume of the edges generated by the above scheme cannot exceed that of the box and 2) total cross-sectional area of all inlets (resp. outlets) cannot exceed the area of the top (resp. bottom) membrane surface.

Our setup ensures that vertices in the top and bottom membrane surfaces are not connected by any edges that lie in the same surface. Another nice feature of our graph generation is that top inlets and bottom outlets are always connected to interior points, a feature that the classical RGG setup does not guarantee (some inlets or outlets may be isolated with positive probability and must be regenerated). 

\subsection{Flow in an Edge}\label{sec:2.3}
We now briefly describe how Newtonian fluid of viscosity $\mu$ flows through a single edge (pore), before introducing the definitions of the relevant graph operators and associated function spaces (in \cref{sec:2.4}), necessary to describe flow through the whole network. Two key physical quantities specify the flow: pressure $\boldsymbol{P}\left(v_i\right)$ at each vertex $v_i \in V$ and flux $\mathbf{Q}_{ij}$ through each edge $\left(v_i,v_j\right)\in E$. For each such edge, with length $A_{ij}$ and radius $R_{ij}$ (assumed to depend only on $i,j$ and time $T$ in the following, but see the more detailed discussion in \cref{sec:2.6} below), we use the Hagen-Poiseuille equation to describe fluid flow through it. This law gives fluid flux $\mathbf{Q}_{ij}$ in $(v_i,v_j)$ as 
\begin{equation}
    \mathbf{Q}_{ij}=\mathbf{K}_{ij}\left(\boldsymbol{P}\left(v_i\right)-\boldsymbol{P}\left(v_j\right)\right), \quad \left(v_i,v_j\right)\in E,
    \label{intro_hp}
\end{equation}
where $\mathbf{K}_{ij}$ are the entries of a weight matrix $\mathbf{K}$ listing the {\it conductance} of each edge, \begin{equation}
\mathbf{K}_{ij}=\begin{cases}
\frac{\pi R_{ij}^{4}}{8\mu A_{ij}}, & \left(v_i,v_j\right)\in E,\\
0, & \text{otherwise}.
\end{cases}
\label{conductance}
\end{equation}
Note that $\mathbf{Q}_{ij}$ is spatially uniform within the edge under the Hagen-Poiseuille framework. Moreover, to ensure the Hagen-Poiseuille approximation remains valid we require that the aspect ratio of each edge, $R_{ij}/A_{ij}$, is small, enforced by choosing $D_{\rm min}$ (introduced in \cref{edge}) sufficiently large that 
\[
\frac{R_{ij}}{A_{ij}}\leq\frac{{\displaystyle \max_{\left(v_{i},v_{j}\right)\in E}R_{ij}}}{D_{\rm min}} \ll 1.
\]
To solve \cref{intro_hp} on a graph with many edges we must utilize the graph structure to express, for example, the pressure difference $\boldsymbol{P}\left(v_i\right)-\boldsymbol{P}\left(v_j\right)$ in \cref{intro_hp}, while maintaining flux conservation at each vertex. We introduce the necessary operators below.

\subsection{Operators and Function Spaces on Graphs}\label{sec:2.4}
We now introduce the function spaces and operators required to solve for the above-defined flow on our graph. First, define the set of {\it vertex functions} 
\begin{align}
\mathcal{V}=\left\{\boldsymbol{u}:V\rightarrow\mathbb{R}:\sum_{x\in V}\boldsymbol{u}\left(x\right)^{2}<\infty\right\},
\label{l2V}
\end{align}
endowed with inner product $\left(\boldsymbol{u\cdot v}\right)_{\mathcal{V}} = \sum_{x\in V}\boldsymbol{u}\left(x\right)\boldsymbol{v}\left(x\right)$, and the set of {\it edge functions}
\begin{align}
\mathcal{E}=\left\{ \boldsymbol{F}:E\rightarrow\mathbb{R}:\frac{1}{2}\sum_{x,y\in V}w_{xy}\boldsymbol{F}\left(x,y\right)^{2}<\infty\right\},
\label{l2E}
\end{align}
where $\boldsymbol{F}$ is a skew-symmetric function that satisfies
\begin{equation}
    \boldsymbol{F}\left(x,y\right) = -\boldsymbol{F}\left(y,x\right),
    \label{flow_condition}
\end{equation}
also known as the {\it flow condition}, and $w_{xy}$ are specified weights. This space is endowed with weighted inner product
\[
\left(\boldsymbol{F}_1\boldsymbol{\cdot}\boldsymbol{F}_2\right)_{\mathcal{E}}=\frac{1}{2}\sum_{x,y\in V}w_{xy}\boldsymbol{F}_1\left(x,y\right)\boldsymbol{F}_2\left(x,y\right).
\]
An example of a vertex function in the context of our problem is the pressure at each pore junction, while flux is an edge function when a well-defined weight is specified.

We next introduce the incidence matrix (and its transpose), a matrix operator that encodes the most fundamental information of a network represented by a graph --- an array specifying the vertices that each edge connects. Its operation on edge functions provides a finite difference of quantities prescribed at connected vertices.
\begin{definition}\label{definition_inc}
(Incidence matrix and transpose) Let $G=\left(V,E\right)$ be a graph. The incidence matrix $\mathbf{M}$ is a linear operator $\mathbf{M}:\mathcal{V}\rightarrow \mathcal{E}$ such that for $\boldsymbol{u}\in \mathcal{V}$, 
\begin{equation}
    \left(\mathbf{M}\boldsymbol{u}\right)\left(v_i,v_j\right) = \boldsymbol{u}\left(v_i\right)-\boldsymbol{u}\left(v_j\right), \quad v_i,v_j\in V.
    \label{disc_grad}
\end{equation}
The transpose $\mathbf{M}^{\rm T}:\mathcal{E}\rightarrow \mathcal{V}$ is also a linear operator that satisfies
\begin{equation}
\left(\mathbf{M}\boldsymbol{u}\boldsymbol{\cdot}\boldsymbol{F}\right)_{\mathcal{E}} =\left(\boldsymbol{u}\boldsymbol{\cdot}\mathbf{M}^{\rm T}\boldsymbol{F}\right)_{\mathcal{V}}, \quad \boldsymbol{u}\in \mathcal{V},\quad \boldsymbol{F}\in \mathcal{E}.
\label{transpose}
\end{equation}
\end{definition}
A direct calculation of \cref{transpose} using the above definitions yields the following identity
\begin{equation}
\left(\mathbf{M}^{\rm T}\boldsymbol{F}\right)\left(x\right)=\sum_{y:\left(x,y\right)\in E}w_{xy}\boldsymbol{F}\left(x,y\right),
\label{div}
\end{equation}
which computes a weighted sum of the edge functions whose edges connect to $x$.

Moreover, each graph $G$ can be associated with a graph Laplacian $\mathbf{L}$, an item central to graph analysis. 
\begin{definition} (Graph Laplacian)
	The $\mathbf{W}$-weighted graph Laplacian is given by 
	\begin{equation}
	\mathbf{L}_{\mathbf{W}}:=\mathbf{D}-\mathbf{W},
	\label{graph_laplacian}
	\end{equation}
    where $\mathbf{D}$ is the (diagonal) $\mathbf{W}$-weighted degree matrix for $G$, with entries
	\begin{equation}
	\mathbf{D}_{ij}=\begin{cases}
	\sum_{k=1}^{\left|V\right|}\mathbf{W}_{ik}, & j=i,\\
	0, & \text{otherwise},
	\end{cases}
	\label{degree}
	\end{equation}
	and $\mathbf{W}$ is a weighted adjacency matrix, with nonnegative entries $\mathbf{W}_{ij}$ when $\left(v_i,v_j\right)\in E$, to be specified according to context. We also say that $e_{ij}$ is an edge connecting $v_i$ and $v_j$ iff $\mathbf{W}_{ij}>0$ (or simply, $v_i$ and $v_j$ are connected or adjacent).
\end{definition}
$\mathbf{\mathbf{L}_W}$ is symmetric since $\mathbf{W}$ is, and the nonzero entries in the diagonal matrix $\mathbf{D}$ represent the respective row sums of $\mathbf{W}$. An unweighted version of $\mathbf{L_W}$ corresponds to the case where $\mathbf{W}_{ij} = 1$ for every $e_{ij}\in E$, and thus each entry of the adjacency matrix is an indicator that identifies an edge between a pair of vertices. In this case, each diagonal element of $\mathbf{D}$ is the number of edges connected to each vertex. 

Combining \cref{disc_grad}, \cref{div} and \cref{graph_laplacian} we obtain the classical result connecting the incidence matrix and the graph Laplacian. 
\begin{proposition}\label{lw}
For $\boldsymbol{u}\in \mathcal{V}$, $\mathbf{M}^{\rm T} \mathbf{M}\boldsymbol{u} = \mathbf{L_W}\boldsymbol{u}$.
\end{proposition}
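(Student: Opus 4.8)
The plan is to verify the identity pointwise: evaluate both $\mathbf{M}^{\rm T}\mathbf{M}\boldsymbol{u}$ and $\mathbf{L_W}\boldsymbol{u}$ at an arbitrary vertex $x\in V$ and show they agree. Since the graph is finite all the sums below are finite, so the convergence conditions built into the definitions of $\mathcal{V}$ and $\mathcal{E}$ play no role and can be ignored.

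First I would form the edge function $\boldsymbol{F}:=\mathbf{M}\boldsymbol{u}\in\mathcal{E}$ using \cref{disc_grad}: for $(v_i,v_j)\in E$, $\boldsymbol{F}(v_i,v_j)=\boldsymbol{u}(v_i)-\boldsymbol{u}(v_j)$, which is manifestly skew-symmetric and hence satisfies the flow condition \cref{flow_condition}. Next I would apply $\mathbf{M}^{\rm T}$ using the explicit formula \cref{div} (which itself follows from the adjoint relation \cref{transpose} by expanding the two inner products, as recorded just after \cref{definition_inc}). This gives, for each $x\in V$,
\[
(\mathbf{M}^{\rm T}\mathbf{M}\boldsymbol{u})(x)=\sum_{y:(x,y)\in E} w_{xy}\bigl(\boldsymbol{u}(x)-\boldsymbol{u}(y)\bigr)
=\Bigl(\sum_{y:(x,y)\in E} w_{xy}\Bigr)\boldsymbol{u}(x)-\sum_{y:(x,y)\in E} w_{xy}\boldsymbol{u}(y).
\]

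Then I would identify the two resulting terms with the two pieces of the graph Laplacian, under the identification $\mathbf{W}_{xy}=w_{xy}$ for $(x,y)\in E$ and $\mathbf{W}_{xy}=0$ otherwise (the weights defining $\mathcal{E}$ and $\mathbf{M}^{\rm T}$ are the same as the adjacency weights defining $\mathbf{L_W}$). The coefficient of $\boldsymbol{u}(x)$ is exactly the weighted degree $\mathbf{D}_{xx}=\sum_{k}\mathbf{W}_{xk}$ from \cref{degree}, since the terms with $(x,y)\notin E$ contribute zero; and the subtracted sum is precisely $(\mathbf{W}\boldsymbol{u})(x)$ for the same reason. Hence $(\mathbf{M}^{\rm T}\mathbf{M}\boldsymbol{u})(x)=(\mathbf{D}\boldsymbol{u})(x)-(\mathbf{W}\boldsymbol{u})(x)=(\mathbf{L_W}\boldsymbol{u})(x)$ by \cref{graph_laplacian}, and since $x$ was arbitrary the proof is complete.

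There is no genuine obstacle here — the statement is a bookkeeping identity. The only point demanding care is consistency of conventions: one must sum the skew-symmetric representative of $\mathbf{M}\boldsymbol{u}$ over ordered pairs with the orientation that matches \cref{div}, and one must fix that the weight $w_{xy}$ in the definition of $\mathbf{M}^{\rm T}$ coincides with the adjacency weight $\mathbf{W}_{xy}$ entering $\mathbf{L_W}$. With that identification pinned down, the computation above closes immediately.
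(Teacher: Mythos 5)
Your proof is correct and follows exactly the route the paper indicates (it combines \cref{disc_grad}, \cref{div}, and \cref{graph_laplacian} and identifies the diagonal and off-diagonal pieces with $\mathbf{D}$ and $\mathbf{W}$); the paper itself does not write out the computation but defers it to Grady's text. Your remark about pinning down the convention $w_{xy}=\mathbf{W}_{xy}$ is the only point of substance, and you handle it properly.
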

This result (a proof of which is given by Grady~\cite{grady}) provides the basis for theoretical understanding of the structure of $\mathbf{M}$, and in turn, $\mathbf{L_W}$. The continuum analog of this result is the equivalence between the divergence of a gradient and the Laplacian. 

In many applications, the choice of weights $\mathbf{W}$ depends on the physical problem of interest. In the context of membrane filtration it is natural to consider weights such as conductance and flux, which we describe in \cref{sec:2.5} in more detail. We will apply \cref{lw} to set up the flow problem on our graph $G$. 

\subsection{Flow on a Graph}\label{sec:2.5}
In this section, we outline a general approach to describe fluid flow on a graph using the definitions given in \cref{sec:2.4} above. We set up the system of governing equations using the structural information of $G$, i.e. the graph Laplacian $\mathbf{L_W}$ with a properly chosen weight matrix $\mathbf{W}$. 

First, we rewrite \cref{intro_hp} using the incidence operator \cref{disc_grad},
\begin{equation}
    \mathbf{Q}_{ij}=\mathbf{K}_{ij}\left(\boldsymbol{P}\left(v_i\right)-\boldsymbol{P}\left(v_j\right)\right) = \mathbf{K}_{ij}\left(\mathbf{M}\boldsymbol{P}\right)\left(v_i,v_j\right), \quad \left(v_i,v_j\right)\in E,
    \label{hp}
\end{equation}
where $\mathbf{K}_{ij}$ is given by \cref{conductance}. The values of the pressure at the vertices, $\boldsymbol{P}\left(v_i\right)$ for each $v_i\in V$, form a vector of length $\left|V\right|$ (a {\it vertex function}, $\boldsymbol{P} \in \mathcal{V}$). The flux values $\mathbf{Q}_{ij}$ naturally form a matrix. Note that $\mathbf{Q}$ is an {\it edge function}, $\mathbf{Q}\in \mathcal{E}$, since it satisfies the flow condition \cref{flow_condition}. Next, we impose conservation of flux at each junction, \begin{equation}
    \sum_{v_j:\left(v_{i},v_{j}\right)\in E}\mathbf{Q}_{ij}=0, \quad v_i\in V_{\rm int}.
\end{equation}
Using \cref{div} with $\boldsymbol{F} = \mathbf{M}\boldsymbol{P}$, \cref{hp} and noting that $\mathbf{Q}\in \mathcal{E}$, we obtain for $v_i\in V_{\rm int}$,
\begin{equation}
    0=\sum_{v_j:\left(v_{i},v_{j}\right)\in E}\mathbf{Q}_{ij}=\sum_{v_j:\left(v_{i},v_{j}\right)\in E}\mathbf{K}_{ij}\left(\mathbf{M}\boldsymbol{P}\right)\left(v_i,v_j\right)=\left(\mathbf{M}^{{\rm T}}\mathbf{M}\boldsymbol{P}\right)\left(v_i\right) = \left(\mathbf{L_K}\boldsymbol{P}\right)\left(v_i\right),
    \label{dis_lp}
\end{equation}
to which we append the boundary conditions
\begin{equation}
    \boldsymbol{P}\left(v\right) = P_0, \quad v\in V_{\rm top};\quad \boldsymbol{P}\left(v\right) = 0, \quad v\in V_{\rm bot},
    \label{p:bc}
\end{equation}
modeling the transmembrane pressure difference that drives fluid flow.

Once we solve the linear system \cref{dis_lp}--\cref{p:bc} for pressure $\boldsymbol{P}$, we use \cref{hp} to compute the entries of the flux matrix $\mathbf{Q}$. 

\subsection{Foulant Advection and Adsorptive Fouling}\label{sec:2.6}
The previous section was concerned solely with the flow of Newtonian fluid through the graph. Here we address the fact that in filtration the fluid is a feed solution carrying particles, which are removed by the filter, leading to fouling. Fouling can occur via a number of distinct modes; we consider only adsorptive fouling by particles much smaller than pores, which are transported through the network by the flow and deposit on pore walls. 

We use a continuum model for the particle (foulant) concentration within the feed. To characterize particle transport on a network, we must describe the transport on each edge, accomplished via an advection equation with an adsorptive sink \cite{sanaei2018flow,gu2020}, then impose conservation of particle flux at each vertex. For each edge $e_{ij}=(v_i,v_j)$ of length $A_{ij}$, with $Y$ a local coordinate measuring distance along the edge from $v_i$ (positive in the direction of flux $\mathbf{Q}_{ij}$), let $C_{ij}\left(Y,T\right)$ be the particle concentration at any point of the edge at time $T$, then
\begin{subequations}
\begin{align}
\mathbf{Q}_{ij}\frac{\partial C_{ij}}{\partial Y} &= -\Lambda R_{ij}C_{ij}, \quad 0\leq Y \leq A_{ij},\label{cont_transport}\\
C_{ij}\left(0,T\right) &= \boldsymbol{C}_{i}\left(T\right), \quad  \left(v_i,v_j\right)\in E, \label{bc_conc}
\end{align}
\end{subequations}
where $\Lambda$ is a parameter (with dimensions of velocity) that captures the affinity between the foulant and the pore wall. In practice this could change depending on whether the pore wall is clean, or already fouled by particle deposits; in this work we assume $\Lambda$ is constant throughout the filtration. We denote by $\boldsymbol{C}_{i}\left(T\right)$ the foulant concentration at the vertex $v_i$, which acts as a boundary condition for the foulant concentration $C_{ij}$ flowing to downstream edges. The array of $\boldsymbol{C}_{i}\left(T\right)$ values forms a vertex function $\boldsymbol{C}$. For $v_i \in V_{\rm top}$, $\boldsymbol{C}_{i}\left(T\right)$ is prescribed by imposing a constant boundary condition (the concentration $C_0$ in the feed solution), and conservation of particle flux is imposed to determine the $\boldsymbol{C}_{i}\left(T\right)$ for $v_i \in V\backslash V_{\rm top}$. 

The system \cref{cont_transport}, \cref{bc_conc} is simple, but once coupled to an evolution equation for pore radii $R_{ij}(Y,T)$ and solved on many different (large) graph realizations to obtain reliable statistics, is time-consuming to solve numerically. We therefore adopt a convenient approximation (discussed further below), assuming that pore radius does not change appreciably along individual pores, and may thus be considered approximately independent of the coordinate $Y$. With this assumption, we observe that the system \cref{cont_transport}, \cref{bc_conc} has analytical solution 
\begin{equation}
    C_{ij}\left(Y,T\right) = \boldsymbol{C}_{i}\left(T\right)\exp\left(-\frac{\Lambda R_{ij}\left(T\right)}{\mathbf{Q}_{ij}}Y\right), \quad 0\leq Y \leq A_{ij}.
    \label{adv_single}
\end{equation}
We write $\tilde{C}_{ij}\left(T\right) := C_{ij}\left(A_{ij},T\right)$ to represent the foulant concentration flowing from $v_i$ into an adjacent vertex $v_j$. Define 
\begin{equation}
\mathbf{B}_{ij}\left(T\right) := \exp\left(-\frac{\Lambda R_{ij}\left(T\right)A_{ij}}{\mathbf{Q}_{ij}}\right),
\label{eq:conc_factor}
\end{equation}
as the multiplicative factor by which foulant concentration changes over the length $A_{ij}$ of the edge. By conservation of particle flux at each vertex $v_i$ (suppressing the temporal and spatial dependence for simplicity of notation),
\begin{align}
0 &= -\overset{\text{outgoing}}{\overbrace{\sum_{v_{k}:\left(v_{i},v_{k}\right)\in E}\mathbf{Q}_{ik}\boldsymbol{C}_{i}}}+\overset{\text{incoming}}{\overbrace{\sum_{v_{k}:\left(v_{i},v_{k}\right)\in E}\mathbf{Q}_{ki}\tilde{C}_{ki}}},\quad v_{i}\in V\backslash V_{\rm top},\label{cons_part_flux}\\
&=-\boldsymbol{C}_{i}\sum_{v_{k}:\left(v_{i},v_{k}\right)\in E}\mathbf{Q}_{ik}+\sum_{v_{k}:\left(v_{i},v_{k}\right)\in E}\mathbf{Q}_{ki}\tilde{C}_{ki}, \nonumber \\
&=-\boldsymbol{C}_{i}\sum_{v_{k}:\left(v_{i},v_{k}\right)\in E}\mathbf{Q}_{ki}+\sum_{v_{k}:\left(v_{i},v_{k}\right)\in E}\mathbf{Q}_{ki}\mathbf{B}_{ki}\boldsymbol{C}_{k}, \label{def_tilde_C_applied}\\
\boldsymbol{C}_{i}\left(T\right) &= C_{0},\quad v_{i}\in V_{{\rm top}},\quad\forall T\geq 0,
\label{Conc_BC}
\end{align}
where $C_0$ is the constant foulant concentration in the feed solution. Note the conservation of flux used here to derive \cref{def_tilde_C_applied} follows a revised definition of the flux matrix $\mathbf{Q}$ (see \cref{eq:mod_Q} in \cref{app:B}). The first term in \cref{cons_part_flux} accounts for all outgoing particle flux from vertex $v_i$, the second for all incoming flux. Conservation of flux at each $v_i\in V\backslash V_{\rm top}$ and the definition of $\tilde{C}_{ij}$ are applied in \cref{def_tilde_C_applied}, which can be written more compactly using $\mathbf{Q}$ as a weight matrix and the in-degree weighted graph Laplacian $\mathbf{L}_\mathbf{Q}^{\rm in}:l^2\left(V\backslash V_{\rm top}\right)\rightarrow l^2\left(V\backslash V_{\rm top}\right)$,
\begin{equation}
\mathbf{L}_\mathbf{Q}^{\rm in}:=\mathbf{D}_{\mathbf{Q}^{\rm T}}-\left(\mathbf{Q}\circ {\mathbf B}\right)^{\rm T},
\label{advection_laplacian}
\end{equation}
where $\circ$ is the Hadamard product (element-wise matrix product). This operator is similar to \cref{graph_laplacian} but with a notion of directionality imparted by $\mathbf{Q}$ (also discussed by Chapman \& Mesbahi~\cite{chapman_mesbahi}). The term $\mathbf{D}_{\mathbf{Q}^{\rm T}}$ is a $\mathbf{Q}^{\rm T}$-weighted degree matrix (per \cref{degree}) that accounts for total incoming (upstream) flux of $v_i$ (thereby ``in-degree''). The last term, which contributes to the off-diagonal entries of $\mathbf{L}_\mathbf{Q}^{\rm in}$, displays the incoming flux reduced (multiplicatively) by the foulant deposition effect reflected in ${\mathbf B}$. Altogether, the transport equation for foulant concentration can be concisely written as
\begin{align}\label{adv_graph}
\mathbf{L}_\mathbf{Q}^{\rm in}\boldsymbol{C} &=\left(\mathbf{Q}\circ {\mathbf B}\right)^{\rm T}\boldsymbol{C}_0, \quad \boldsymbol{C} \in l^2\left(V\backslash V_{\rm top}\right), \quad \forall T\geq 0, \\
\boldsymbol{C}_0\left(T\right) &= \left(C_0,\ldots,C_0,0,\ldots,0\right)^{\rm T},
\end{align}
with boundary condition $C_0$ specified for each $v\in V_{\rm top}$ (the number of nonzero entries $C_0$ in $\boldsymbol{C}_0$ is equal to $\left|V_{\rm top}\right|$, per \cref{Conc_BC}). By solving this linear system, we obtain the particle concentration $\boldsymbol{C}_{i}$ at each vertex $v_i \in V\backslash V_{\rm top}$.

We assume that the pore radius shrinkage due to adsorption follows a simple model (used by Sanaei \& Cummings~\cite{sanaei2018flow}, among others) such that its rate of change is proportional to local particle concentration. With the approximation introduced above, that pore radius $R_{ij}$ varies only modestly with distance $Y$ along the pore, we assume that its rate of shrinkage is determined by the foulant concentration at the upstream\footnote{This concentration is chosen because it will dictate, in practice, the fastest shrinkage rate at the upstream end of the pore, and pore resistance is dominated by the narrowest pore radius.} vertex $v_i$,
\begin{equation}
    \frac{dR_{ij}}{dT}=-\Lambda\alpha \boldsymbol{C}_{i}, \quad R_{ij}\left(0\right) = R_{ij,0}, \quad \left(v_i,v_j\right)\in E,
    \label{adsorption}
\end{equation}
where $\alpha$ relates to foulant particle volume, see \cref{app:A}. We also assume that all radii initially take the same value, i.e. $R_{ij,0} = R_0$. This final equation \cref{adsorption} closes the membrane filtration model with adsorption.

Our assumption that pore radius is spatially uniform is motivated by significant computational cost savings; we have verified that the simulations reported later in the manuscript are more than 100 times faster compared to those where this simplification is not used. The price to pay for this saving is a slight overestimate of the membrane resistance, and therefore underestimate of the total throughput. To confirm that the approximation introduced by the model has only a minor influence, we have directly compared a subset of results obtained using this model with results obtained using the full model that allows $R_{ij}(Y,T)$, finding that the difference in throughput is on average around $10\%$ at maximum (and much smaller than 10\% where the maximum pore radius $D$ is small compared to the membrane thickness $W$). The differences in computed values of the particle concentration are smaller still.  We consider that this is reasonable, in particular since the approximation allows us to perform efficiently a large number of simulations and explore the influence of variability of results with respect to the model parameters of interest.

In simulating filtration through a network we impose a {\it stopping criterion} that membrane filtration ends when there exist no flow paths between any vertices in $V_{\rm top}$ and $V_{\rm bot}$, due to pore closures (individually characterized by $R_{ij} = 0$). The criterion is checked by a pathfinding algorithm. We terminate the filtration at the earliest time $T_{\rm final}$ that the criterion is satisfied. Note that in physical membranes, even when adsorptive foulants have accumulated to the extent that the pore is essentially closed, leakage through the pore may still take place; we have not included such effects in the present model.

\subsection{Measures of Performance\label{sec:2.7}}

Volumetric throughput of a membrane filter over its operational lifetime is a commonly-used measure of overall efficiency. 
\begin{definition}The volumetric throughput $V(T)$ through the filter is defined by
\[
    H\left(T\right)=\int_{0}^{T}Q_{\rm out}\left(T^{\prime}\right)dT^{\prime}, \quad Q_{{\rm out}}\left(T\right)=\sum_{v_{j}\in V_{{\rm bot}}}\sum_{v_{i}:\left(v_{i},v_{j}\right)\in E}\mathbf{Q}_{ij}\left(T\right)
\]
where $Q_{{\rm out}}\left(T\right)$ is the total flux exiting the filter.
\end{definition}
In particular, we are interested in $H_{\rm final}:=H\left(T_{\rm final}\right)$, the total volume of filtrate processed by the filter over its lifetime. 

Another performance measure is the accumulated foulant concentration in the filtrate, which captures the aggregate particle capture efficiency of the filter.
\begin{definition} The accumulated foulant concentration is defined by 
\[
    C_{{\rm acm}}\left(T\right)=\frac{\int_{0}^{T}C_{{\rm out}}\left(T^{\prime}\right)Q_{{\rm out}}\left(T^{\prime}\right)dT^{\prime}}{\int_{0}^{T}Q_{{\rm out}}\left(T^{\prime}\right)dT^{\prime}},
\]
where 
\[
C_{{\rm out}}\left(T\right)=\frac{{\displaystyle \sum_{v_{j}\in V_{{\rm bot}}}\sum_{v_{i}:\left(v_{i},v_{j}\right)\in E}}\boldsymbol{C}_{j}\left(T\right)\mathbf{Q}_{ij}\left(T\right)}{Q_{{\rm out}}\left(T\right)}.
\]
\end{definition}
Of particular interest is $C_{\rm acm}\left(T_{\rm final}\right)$, which provides a measure of the aggregate particle capture efficiency of the filter over its lifetime. 

\section{Nondimensionalisation}\label{sec:3}

We nondimensionalise the model presented in \cref{sec:2} with the following scales,
\begin{equation}\label{eq:scaling}
    \begin{gathered}
    \boldsymbol{P}=P_{0}\boldsymbol{p},\qquad\boldsymbol{X}=W\boldsymbol{x},\qquad A_{ij}=Wa_{ij},\\
    \left(D,D_{{\rm min}}\right)=W\left(d,d_{{\rm min}}\right),\qquad R_{ij}=Wr_{ij},\qquad R_{0}=Wr_{0},\\
    \mathbf{Q}_{ij}=\frac{\pi W^{3}P_{0}}{8\mu}\mathbf{q}_{ij},\qquad\mathbf{K}_{ij}=\frac{\pi W^{3}}{8\mu}\mathbf{k}_{ij},\qquad\mathbf{k}_{ij}=\frac{r_{ij}^{4}}{a_{ij}},\\
    \boldsymbol{C}=C_{0}\boldsymbol{c},\qquad Y=Wy,\qquad\Lambda=\frac{\pi WP_{0}}{8\mu}\lambda,\qquad T=\frac{W}{\Lambda\alpha C_{0}}t,\qquad V=\frac{W^{3}}{\alpha C_{0}}v.
    \end{gathered}
\end{equation}
Under these scalings we derive dimensionless equations for pressure $\boldsymbol{p}$ and flux $\mathbf{q}$,
\begin{subequations}
\begin{align}
\mathbf{L_k} \boldsymbol{p}&= 0, \label{dimless_laplace}\\
\boldsymbol{p}\left(v\right) &= 1, \quad \forall v\in V_{\rm top}; \quad \boldsymbol{p}\left(v\right) = 0, \quad \forall v\in V_{\rm bot}, \label{dimless_laplace_bc}\\
\mathbf{q}_{ij}&=\mathbf{k}_{ij}\left(\boldsymbol{p}\left(v_i\right)-\boldsymbol{p}\left(v_j\right)\right), \quad \forall \left(v_i,v_j\right) \in E, \label{dimless_hp}
\end{align}
\end{subequations}
where $\mathbf{L_k}$ is defined in \cref{graph_laplacian}; for foulant concentration $\boldsymbol{c}$,
\begin{subequations}
\begin{align}
\mathbf{L}_\mathbf{q}^{\rm in}\boldsymbol{c} &=\left(\mathbf{q}\circ {\mathbf b}\right)^{\rm T}\boldsymbol{c}_0, \quad \mathbf{L}_\mathbf{q}^{\rm in}=\mathbf{D}_{\mathbf{q}^{\rm T}}-\left(\mathbf{q}\circ {\mathbf b}\right)^{\rm T}, \quad  \boldsymbol{c} \in l^2\left(V\backslash V_{\rm top}\right),\label{dimless_conc}\\
\boldsymbol{c}_0 &= \left(1,\ldots,1,0,\ldots,0\right)^{\rm T}, \quad \mathbf{b}_{ij} = \exp\left(\frac{-\lambda r_{ij} a_{ij}}{\mathbf{q}_{ij}}\right),\label{dimless_conc_bc}
\end{align}
\end{subequations}
where $\mathbf{L}_\mathbf{q}^{\rm in}$ is given by \cref{advection_laplacian}; and for pore radius $r_{ij}$ (for the pore $e_{ij}=(v_i,v_j)$),
\begin{equation}
    \frac{dr_{ij}}{dt}=-\boldsymbol{c}_{i}, \quad r_{ij}\left(0\right) = r_0, \quad \forall \left(v_i,v_j\right)\in E.
\label{dimless-adsorption}
\end{equation}
The dimensionless throughput is given by 
\begin{equation}
 h\left(t\right)=\frac{1}{\lambda}\int_{0}^{t}q_{\rm out}\left(t^{\prime}\right)dt^{\prime}, \quad q_{\rm out}\left(t\right) = \sum_{v_{j}\in V_{{\rm bot}}}\sum_{v_{i}:\left(v_{i},v_{j}\right)\in E}\mathbf{q}_{ij}\left(t\right),
    \label{dimless_thruput}
\end{equation}
and $h_{\rm final}:= h\left(t_{\rm final}\right)$. Dimensionless accumulated foulant concentration is written as
\begin{equation}
    c_{{\rm acm}}\left(t\right)=\frac{\int_{0}^{t}c_{{\rm out}}\left(t^{\prime}\right)q_{{\rm out}}\left(t^{\prime}\right)dt^{\prime}}{\int_{0}^{t}q_{{\rm out}}\left(t^{\prime}\right)dt^{\prime}},
    \label{c_acm}
\end{equation}
where 
\[
c_{{\rm out}}\left(t\right)=\frac{{\displaystyle \sum_{v_{j}\in V_{{\rm bot}}}\sum_{v_{i}:\left(v_{i},v_{j}\right)\in E}}\boldsymbol{c}_{j}\left(t\right)\mathbf{q}_{ij}\left(t\right)}{q_{{\rm out}}\left(t\right)}.
\]
\section{Algorithm}
\label{sec:4}
We summarize the network generation protocol and solution technique for the proposed model equations in \cref{algo}. We refer the reader to a worked example in \cref{app:C} for a better visualisation of how the governing equations evolve on a very simple network and how our chosen performance metrics depend on the model parameters.
\begin{algorithm}[ht]
\caption{\bf Filtration with adsorption.}
\label{algo}
\begin{enumerate}
\item Initialization
    \begin{enumerate}
    \item Generate uniformly $N_{\rm total}$ random points in box $\left[0,1\right]^{2}\times \left[-0.5,1.5\right]$. 
    \item Connect all points separated by a distance smaller than $d$ but larger than $d_{\rm min}$ using the metric $\chi$ (periodic or isolated). Discard isolated points.
    \item Truncate the box with cutting planes at $x_{3} =0$ and $x_{3} =1$; record points with $0 < x_{3} < 1$ as the set of interior points $V_{\rm int}$; label intersections between cutting planes and edges as inlets $V_{\rm int}$ for intersections at $x_{3}=0$ and outlets $V_{\rm out}$ for those at $x_{3}=1$. Set $V = V_{\rm int} \cup V_{\rm top} \cup V_{\rm bot}$.
    \item Initialise radii $r_{ij,0}$ for $\left(r_{ij}\right)_{\left(v_i,v_j\right)\in E}$.
    \end{enumerate}

\item Fluid Flow
    \begin{enumerate}
        \item Find pressures $\boldsymbol{p}$ and fluxes $\mathbf{q}$ by solving \cref{dimless_laplace}-\cref{dimless_hp}.
    \end{enumerate}
\item Foulant Concentration
    \begin{enumerate}
        \item Initialise concentrations $\boldsymbol{c}_0$ (per \cref{dimless_conc_bc}) for $v_i \in V_{\rm top}$.
        \item Find foulant concentration $\boldsymbol{c}$ by solving \cref{dimless_conc}-\cref{dimless_conc_bc}.
    \end{enumerate}

\item Adsorption 
    \begin{enumerate}
        \item Evolve each pore radius $r_{ij}$ via \cref{dimless-adsorption} until it decreases to $0$ and stays at $0$ for the rest of the filtration.
    \end{enumerate}
\item Check {\it stopping criterion}. Stop if satisfied.
\item Compute throughput using \cref{dimless_thruput}.
\item Update conductance $\mathbf{k}$ and weighted graph Laplacian $\mathbf{L_k}$.
\item Increment time $t\to t+dt$ and return to (2). 
\end{enumerate}

\end{algorithm}

\section{Results}
\label{sec:5}
In this section, we present numerical results on how filter performance metrics, such as total throughput and accumulative foulant concentration (introduced in \cref{sec:2.7}), depend on filter geometry, characterized by three main geometric network parameters detailed below. For each metric/parameter pair considered we first describe the observed trends and compare the results between the isolated and periodic network configurations. Then, we discuss how the findings are related, with a focus on identifying universal parameters that describe filter performance.

\subsection{Geometric Network Parameters}\label{sec:5.1}
To investigate different network architectures, we introduce three geometric parameters. The first is the average number of neighbors for interior vertices/junctions, 
\[
N := \frac{\left|E\right|}{\left|V_{\rm int}\right|}.
\]
This parameter provides one way to characterize the connectivity strength of a network. We compute $N$ only over interior vertices, as any inlet or outlet has exactly one neighbor due to our network generation protocol. Second, the initial void volume, 
\begin{equation}
{\rm Vol}_0 = \frac{\pi}{2} \sum_{\left(v_i,v_j\right)\in E} a_{ij}r_{ij}^2\left(t=0\right) = \frac{\pi r_0^2}{2} \sum_{\left(v_i,v_j\right)\in E} a_{ij},
\label{ini_vol}
\end{equation}
provides an estimate of how much fluid the filter can process initially, and how much surface area is available for foulant adsorption (since the initial pore radius is the same for all pores, the initial total pore surface area is given by $2 {\rm Vol}_0/r_0$). Since the representative membrane volume we consider is a unit cube in our dimensionless framework, ${\rm Vol}_0$ is exactly the initial membrane {\it porosity.} Lastly, we consider tortuosity $\tau$, defined by the {\it average length of paths} (defined with respect to flow paths) that connect the top and bottom membrane surface through the network, normalized by the membrane thickness (the shortest possible path). A rigorous definition of $\tau$ is given in the appendix (\cref{tor_def} and \cref{tort_formula}).

Our input parameters $\left(d,N_{\rm total}\right)$ (see \cref{algo}) are chosen so that the volume and area constraints (initial total pore volume should not exceed that of the representative unit membrane cube, and total pore cross-sectional area on membrane top and bottom surfaces should not exceed the area of the unit square, see Section 2.2) are not violated, and so that the generated graphs are nontrivially connected from top to bottom surfaces. The ranges of $d$ and $N_{\rm total}$ are $\left[0.1,0.45\right]$ and $\left[100,5000\right]$ respectively.  In all simulations we fix $r_0 = 0.01$, $d_{\rm min} = 0.06$, $\lambda = 5\times 10^{-7}$ and produce data curves for each chosen $d$-value (represented by the same marker) by varying $N_{\rm total}$. 

Due to the random nature of our network generation, we compute the average of quantities of interest over a number of realizations. For simplicity, we do not use additional notation to indicate such averaged quantities. For all results below, each data point is obtained by averaging over $500$ simulations, i.e. $500$ realizations of a random graph with parameter pair $\left(d,N_{\rm total}\right)$. We histogram these $500$ realizations in the supplement for each pair $\left(d,N_{\rm total}\right)$, to demonstrate that their variance is influenced only by geometric parameters, justifying the sufficiency of this number of realizations. All results are shown for the two cases in which the underlying random graphs are generated with isolated and periodic boundary conditions, with results (data points in blue and red respectively) compared side-by-side.

\subsection{Initial Void Volume and Average Number of Neighbors}
In this section, we present results showing the dependence of performance metrics on two geometric parameters: initial void volume ${\rm Vol}_0$ and average number of neighbors $N$. 

\begin{figure}[!ht]
    \centering
    \subfloat[]{\label{fig:tt_vol_np}\includegraphics[width=.47\textwidth,height=.4\textwidth]{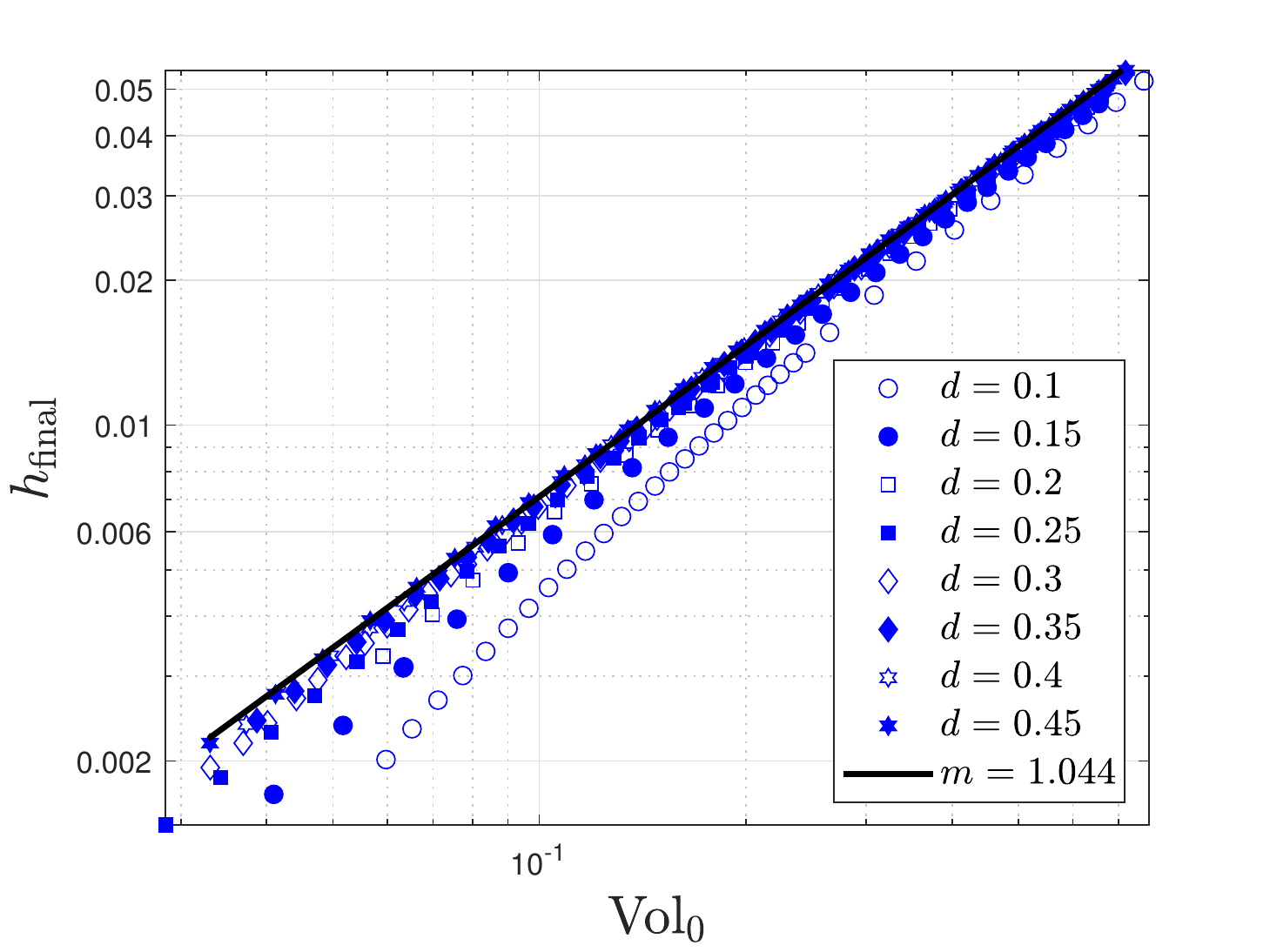}}
    \subfloat[]{\label{fig:tt_vol_p}\includegraphics[width=.47\textwidth,height=.4\textwidth]{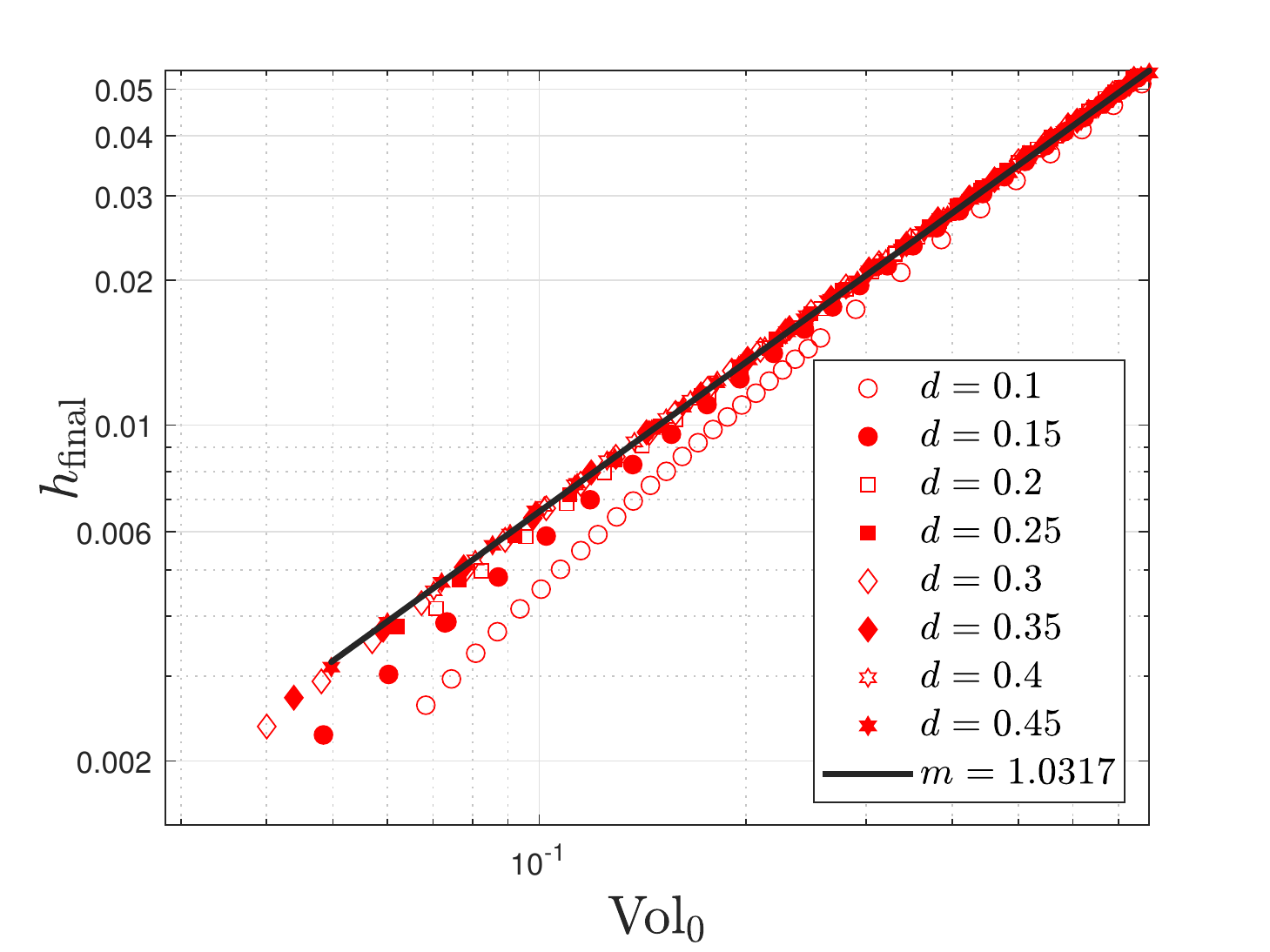}}
    \caption{Total throughput $h_{\rm final}$ vs initial void volume Vol$_0$ (loglog scales). (a) {\color{blue}Isolated network setup}; (b) {\color{red}periodic setup}. Line of best fit for $d=0.45$ is in black, with gradient $m$ given in legend (with $R^2 = 0.99989$ and $0.99993$ respectively). Distribution of error for each data point is given in the histograms in the supplement.}
    \label{fig:tt_vol}
\end{figure}
\cref{fig:tt_vol} shows total throughput $h_{\rm final}$ against initial void volume ${\rm Vol}_0$ on a log-log scale, for various $d$-values. Both quantities are increasing functions of the initially-specified number of random points, $N_{\rm total}$, for each $d$. In both isolated and periodic cases we see a power law relationship emerge for sufficiently large ${\rm Vol}_0$, with good collapse of the data for ${\rm Vol}_0\gtrsim 0.2$, $d\gtrsim 0.2$ onto a single line, hinting at a universal law. The fact that $h_{\rm final}$ is an increasing function of ${\rm Vol}_0$ makes sense because the larger the initial pore volume, the more filtrate can be processed. In both setups, the data curves for smaller $d$-values begin to deviate from the universal power law for ${\rm Vol}_0\lesssim 0.3$, which enables us to observe some hierarchy. For similar ${\rm Vol}_0$-values, corresponding to similar total length of edges per \cref{ini_vol}, we see that larger $d$-values lead to larger total throughput $h_{\rm final}$. We attribute this to the fact that networks with larger $d$-values induce more inlets on the upstream (and downstream) membrane surfaces, allowing more filtrate to pass through. The differences between the isolated and periodic setups in \cref{fig:tt_vol} are minor.

\begin{figure}[!ht]
    \centering
    \subfloat[]{\label{fig:tt_nbar_np}\includegraphics[width=.47\textwidth,height=.4\textwidth]{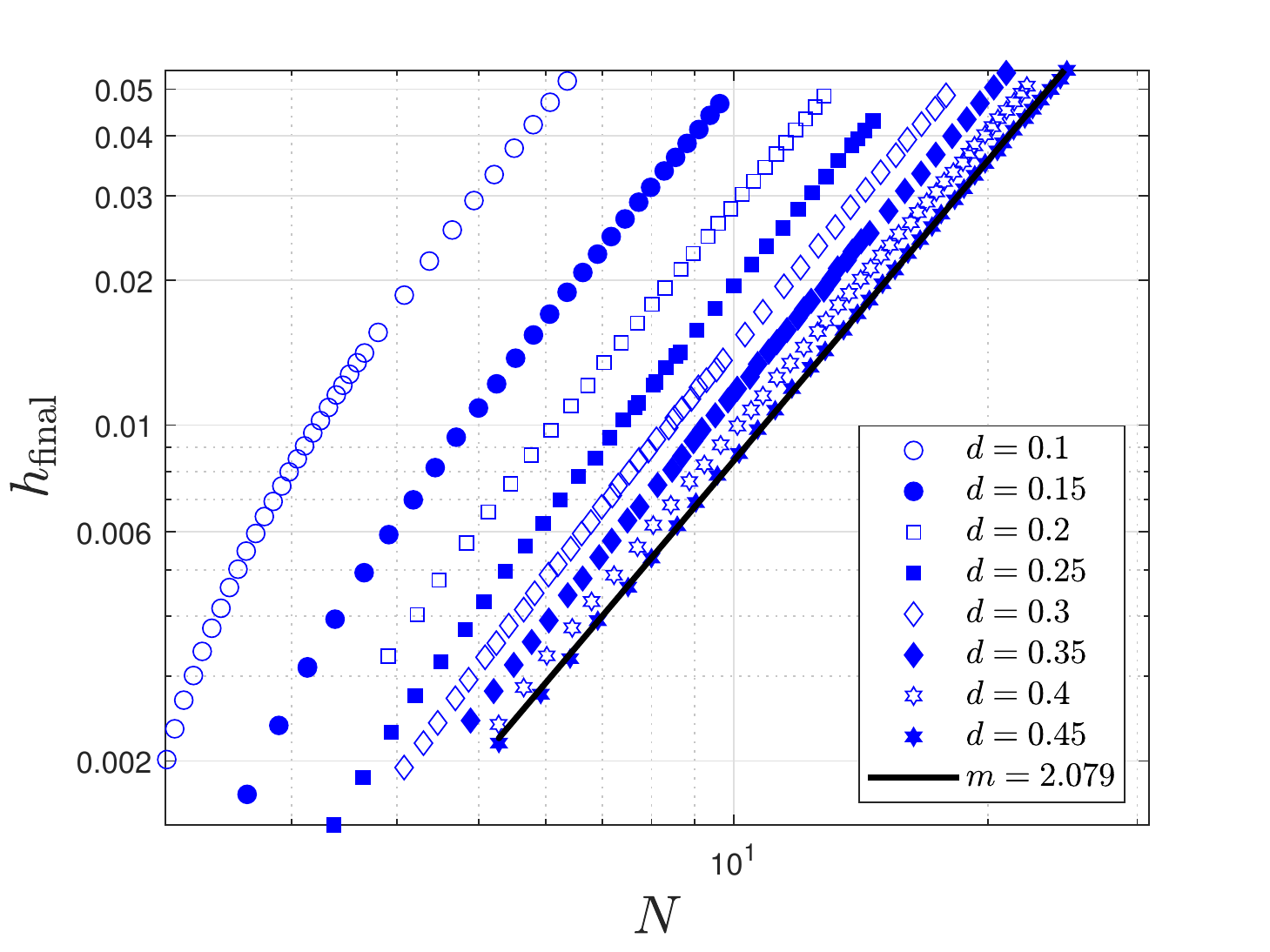}}
    \subfloat[]{\label{fig:tt_nbar_p}\includegraphics[width=.47\textwidth,height=.4\textwidth]{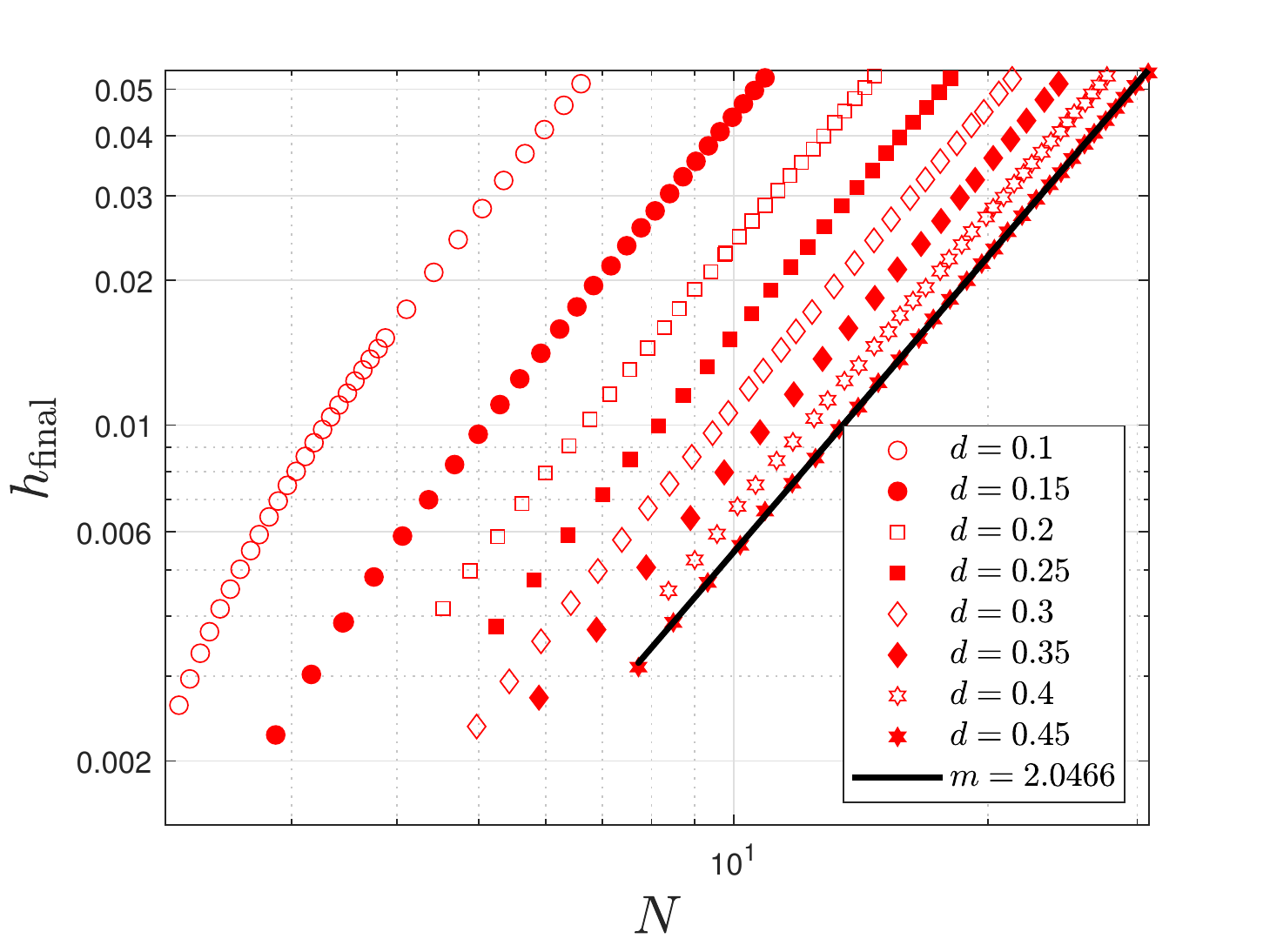}}\\
    \subfloat[]{\label{fig:tt_volume_scaled_nbar_np}\includegraphics[width=.47\textwidth,height=.4\textwidth]{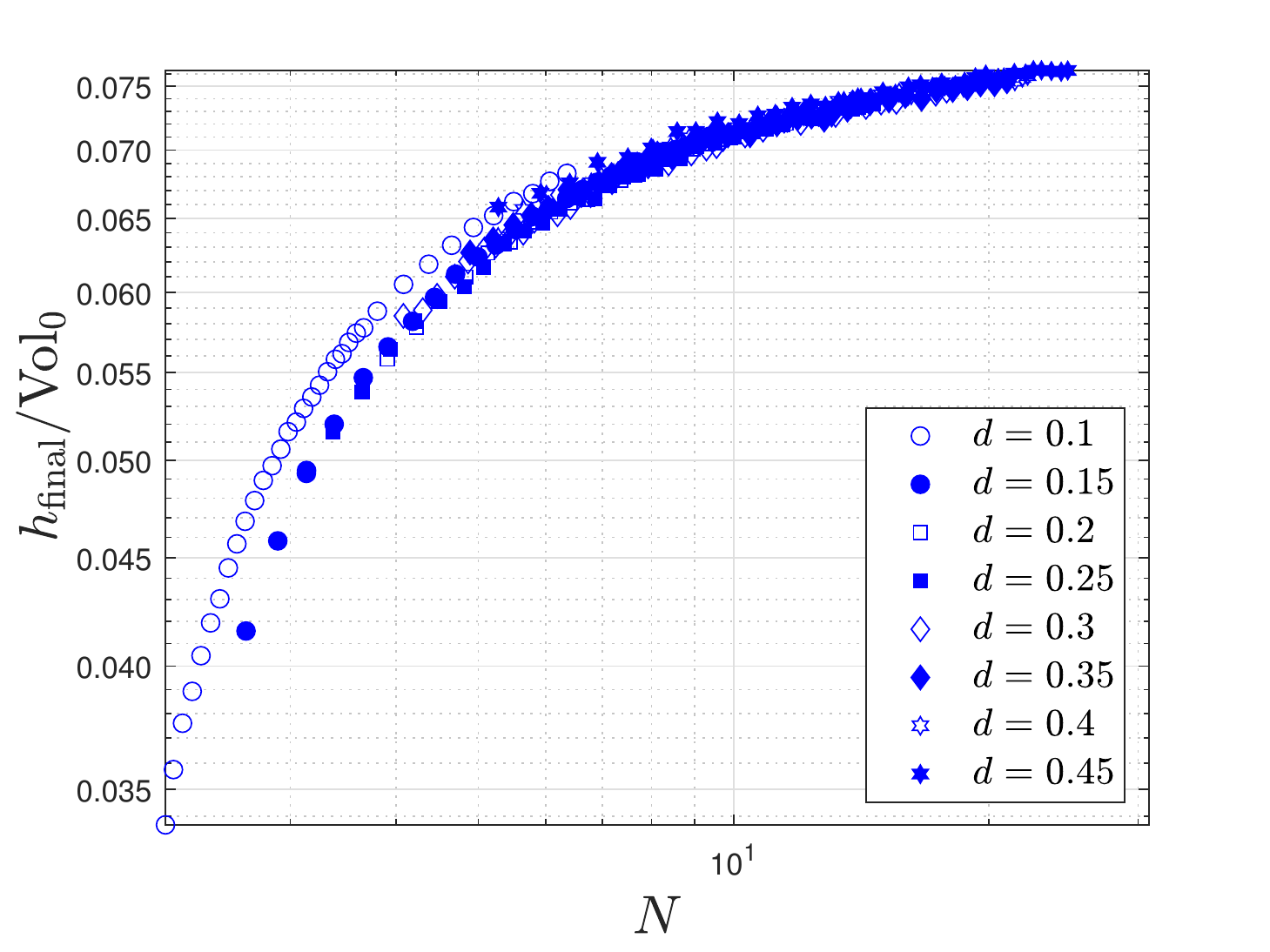}}
    \subfloat[]{\label{fig:tt_volume_scaled_nbar_p}\includegraphics[width=.47\textwidth,height=.4\textwidth]{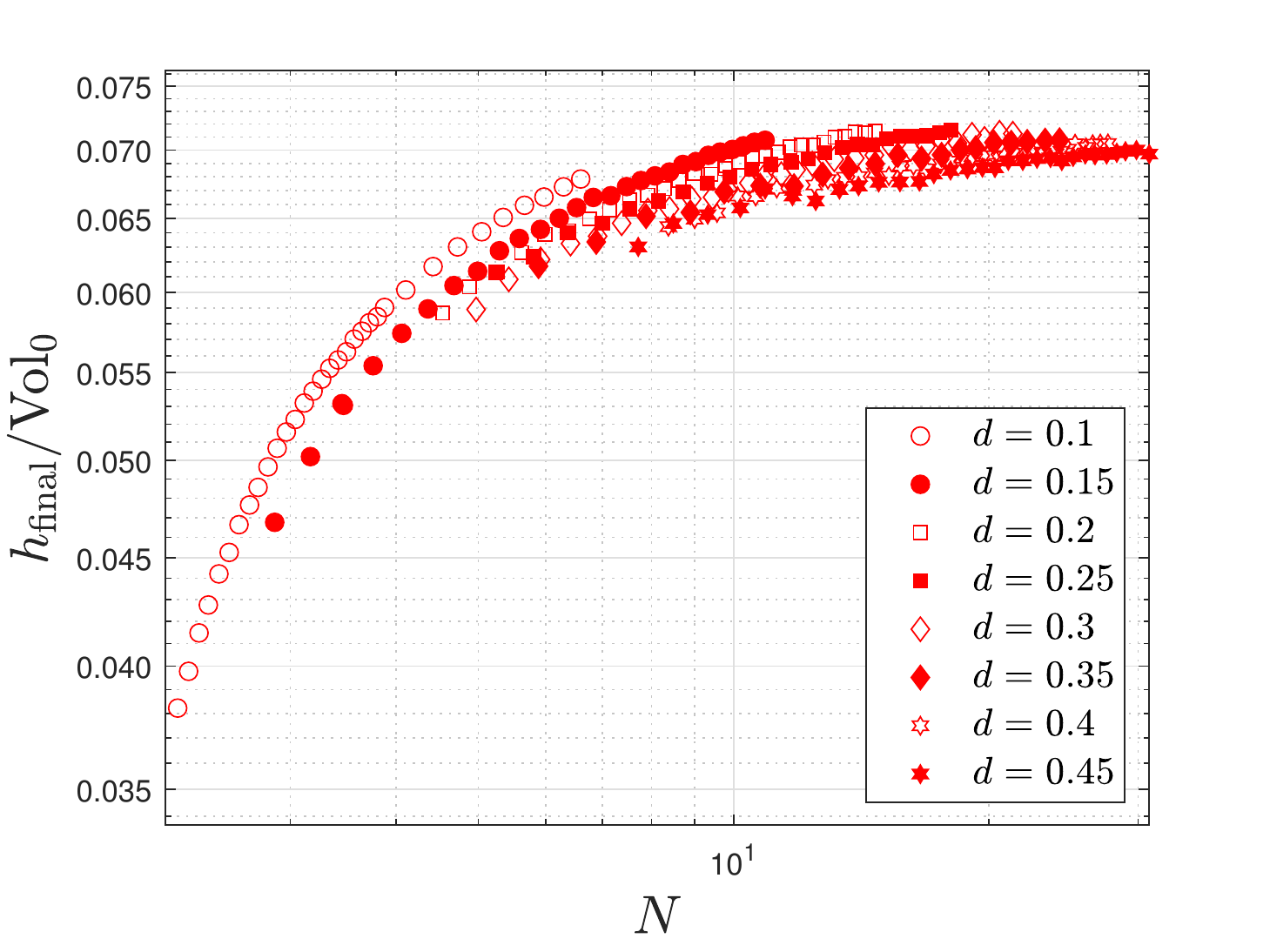}}
    \caption{Total throughput $h_{\rm final}$ vs average number of neighbors $N$ (loglog scales). (a) {\color{blue}Isolated network setup}; (b) {\color{red}periodic setup}.  Line of best fit for $d=0.45$ is in black, with gradient $m$ given in legend. Distribution of error for each data point is given in the histograms in the supplement. Same setup for (c) and (d) with $h_{\rm final}/{\rm Vol}_0$ as vertical axis (loglog scales).}
    \label{fig:tt_nbar}
\end{figure}
In \cref{fig:tt_nbar_np} and \ref{fig:tt_nbar_p}, we plot total throughput $h_{\rm final}$ against average number of neighbors $N$ (the average number of pores entering or leaving each vertex) for various values of the search radius $d$, using a log scale for both axes. Each data point shown corresponds to a different choice of $N_{\rm total}$; and, as discussed above, represents results averaged over 500 individual random graph realizations. Several features are common to both plots: first, $h_{\rm final}$ is an increasing function of $N$ for each $d$ value and in particular, obeys a power law for sufficiently large $N$. For fixed $N$, the smaller the search radius $d$, the larger the average throughput $h_{\rm final}$. This suggests that the number of pore junctions plays an important role in controlling and predicting total throughput. Second, the slopes for each $d$-value at large $N$ are very similar for both isolated and periodic setups, suggesting a common power law relating $h_{\rm final}$ and $N$ for each $d$ (see $d=0.45$ in \cref{fig:tt_nbar}(a,b) with slope $\approx 2$ in both cases). Moreover, as $d$ increases the curves in \cref{fig:tt_nbar}(a,b) become more closely-spaced in both setups, demonstrating that filters with sufficiently large search radius perform similarly in terms of total throughput. In both cases the quantities $N$ and $h_{\rm final}$ are increasing functions of $N_{\rm total}$ for each $d$.

We also highlight the differences between the two compared network connection metrics in \cref{fig:tt_nbar_np} and \ref{fig:tt_nbar_p}. First, for small $d$ (e.g. $d = 0.1,0.15$), the isolated and periodic networks give similar throughputs since random graph generations under the two metrics do not differ greatly; however, as $d$ increases, the periodic metric generates more edges than the isolated one (for the same value of $N_{\rm total}$), hence the periodic setup yields higher throughput. By the same reasoning, for fixed $h_{\rm final}$, $N$ is much larger in the periodic setup than in the isolated one for each $d \gtrsim 0.3$ (the data curves in the isolated setup are seen to be more closely-spaced than those in the periodic setup as $d$ increases). In other words, though $N$ is an increasing function of $d$ in each setup, it increases at a higher rate under the periodic metric.

Motivated by the strong relationship between total throughput and volume found in \cref{fig:tt_vol}, we also plot volume-scaled total throughput, $h_{\rm final}/{\rm Vol}_0$, against average number of neighbors $N$ for both connection metrics in \cref{fig:tt_volume_scaled_nbar_np} and \ref{fig:tt_volume_scaled_nbar_p}. This volume-scaled total throughput can be understood as a measure of efficiency of the membrane filter in terms of filtrate production capability. In \cref{fig:tt_volume_scaled_nbar_np}, we observe a good collapse of data in the non-periodic setup, which suggests that the average number of neighbors is a good predictor for total throughput scaled by volume. We see a similar trend in \cref{fig:tt_volume_scaled_nbar_p} for the periodic setup, though the collapse is less strong. We further note that in both non-periodic and periodic setups, data for $d=0.1$ are slightly separated from the rest. This is consistent with the deviation of data for $d=0.1$ in \cref{fig:tt_vol} from the universal power law relating total throughput and volume.

\begin{figure}[!ht]
    \centering
    \subfloat[]{\label{fig:conc_vol_np}\includegraphics[width=.47\textwidth,height=.4\textwidth]{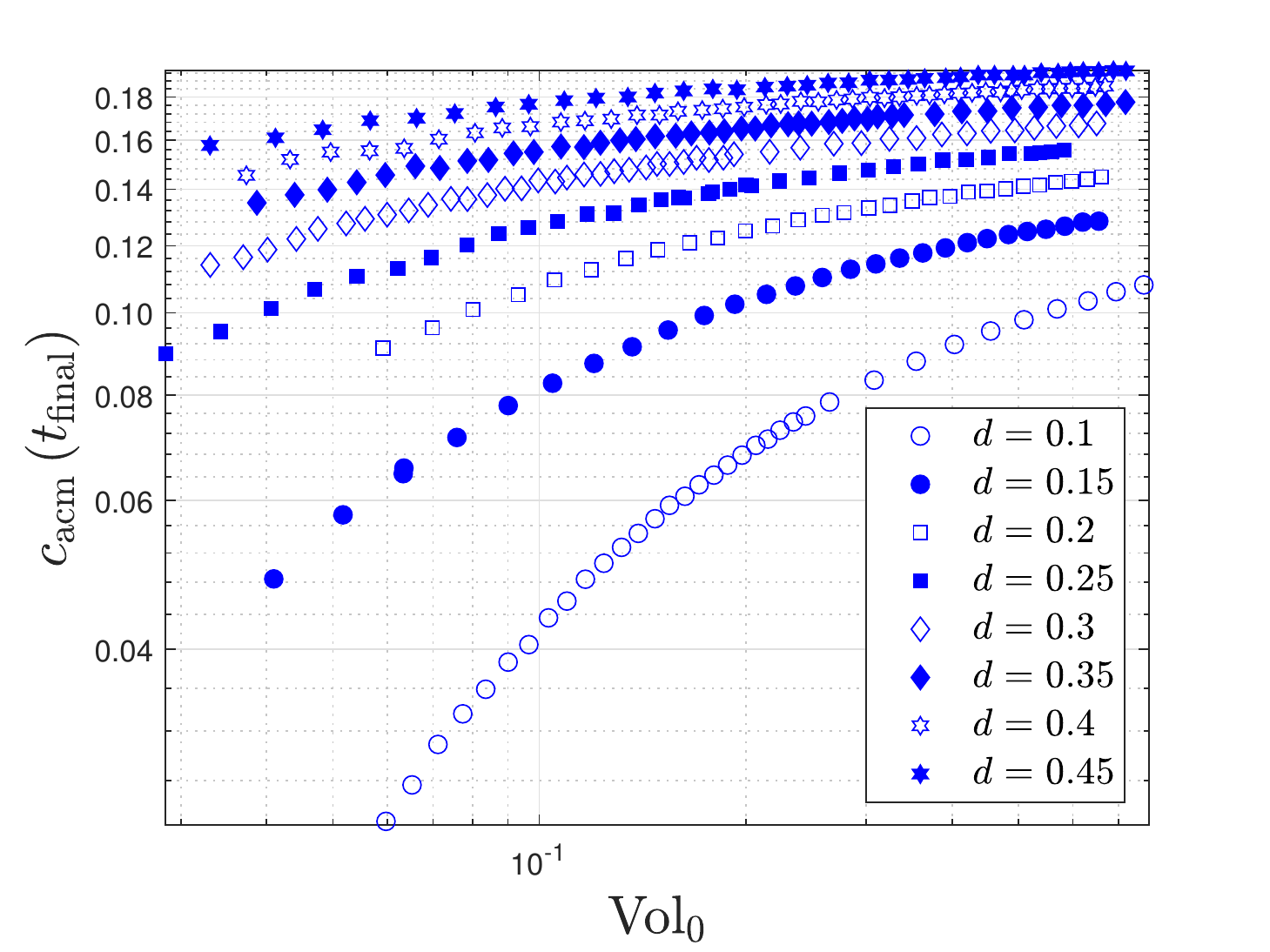}}
    \subfloat[]{\label{fig:conc_vol_p}\includegraphics[width=.47\textwidth,height=.4\textwidth]{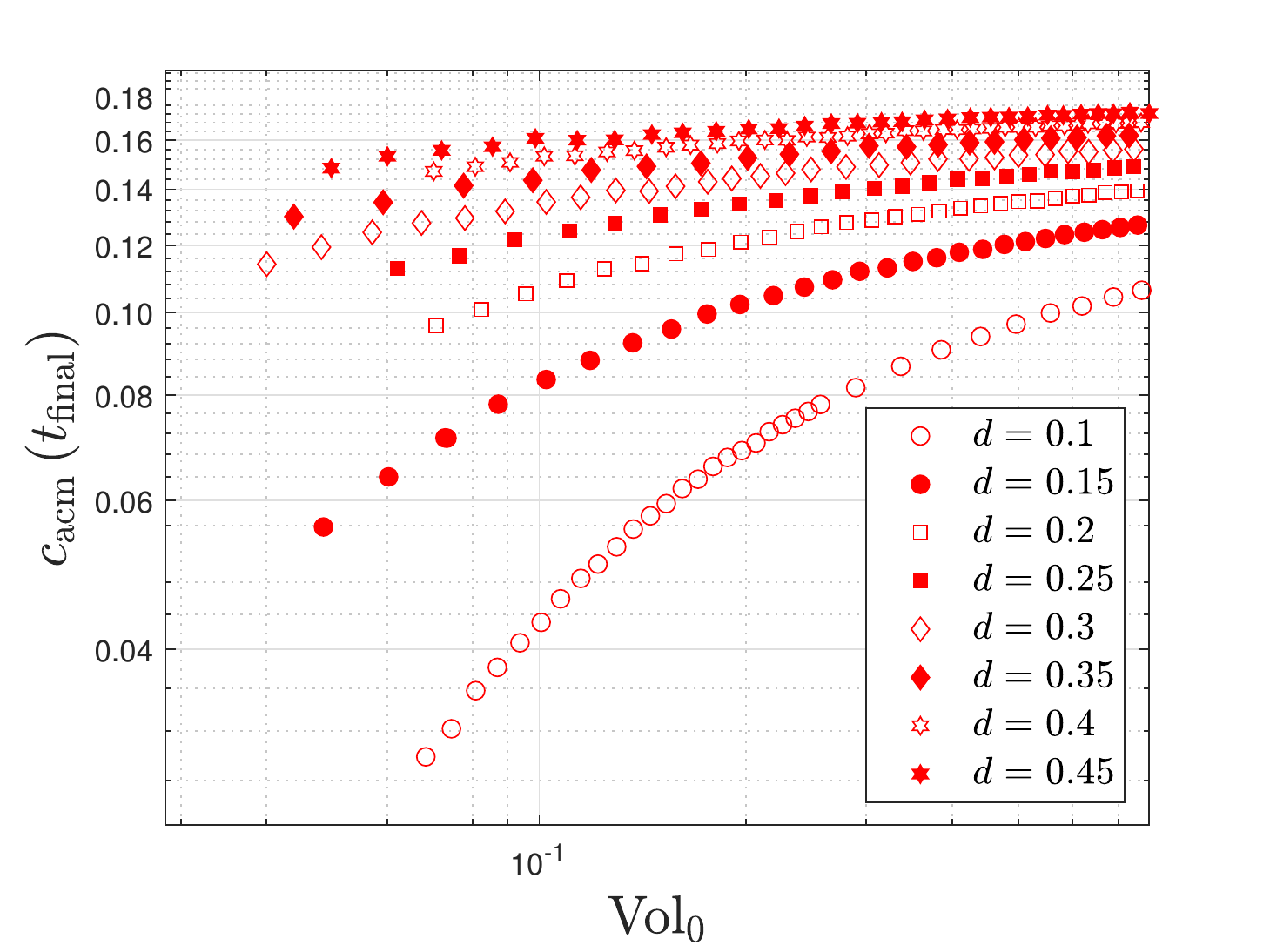}}
    \caption{Final accumulated foulant concentration $c_{\rm acm}(t_{\rm final})$ vs initial void volume Vol$_0$ (loglog scales). (a) {\color{blue}Isolated network setup}; (b) {\color{red}periodic setup}. Distribution of error for each data point is given in the histograms in the supplement.}
    \label{fig:conc_vol}
\end{figure}
\cref{fig:conc_vol} shows final accumulated foulant concentration in the filtrate, $c_{\rm acm}\left(t_{\rm final}\right)$, vs initial void volume, ${\rm Vol}_0$, on a log-log scale, as $d$ is varied. Both quantities are increasing functions of $N_{\rm total}$ for each $d$. The plots for both isolated and periodic network metrics show that $c_{\rm acm}\left(t_{\rm final}\right)$ increases as search radius $d$ increases for fixed ${\rm Vol}_0$. This is because with similar initial pore volume and thus similar initial total edge length, networks with larger $d$ have fewer edges to which foulants can adhere. Combining this finding with the result in \cref{fig:tt_vol} for a fixed volume further confirms the intuition that networks producing larger throughput (larger $d$) also have worse foulant control. A second trait shared by both network setups is that for larger $d$ (for example $d=0.4,0.45$), $c_{\rm acm}\left(t_{\rm final}\right)$ is relatively insensitive to changes in Vol$_0$, while for smaller $d$ (say $d=0.1,0.15$), the change in $c_{\rm acm}\left(t_{\rm final}\right)$ with Vol$_0$ is much more dramatic. This says that in membrane networks with longer pores, final accumulated foulant concentration depends less on initial void volume than in networks with shorter pores. We further observe that for a given ${\rm Vol}_0$, for small $d$, the values of $c_{\rm acm}\left(t_{\rm final}\right)$ are similar for both isolated and periodic configurations. However, as $d$ increases, the isolated case incurs larger $c_{\rm acm}\left(t_{\rm final}\right)$ and thereby exhibits worse foulant control than the periodic case. We defer further discussion of this observation to \cref{sec:5.3} below. Lastly, we find that in practice, when a tolerance level of contaminant concentration in the filtrate is specified in \cref{fig:conc_vol}, networks with smaller $d$ values have larger initial pore volume, which corresponds to larger total throughput via the strong relationship observed in \cref{fig:tt_vol} (e.g. if a threshold of $c_{\rm acm}\left(t_{\rm final}\right) = 0.06$ is set, filters with $d=0.1$ are preferable as they have larger initial pore volume and thus larger total throughput).

\subsection{Tortuosity\label{sec:5.3}}
We now examine the dependence of our performance metrics on initial network tortuosity $\tau$, the average distance travelled by a fluid particle from the top membrane surface to the bottom, before any fouling has occurred. A full characterization of $\tau$ is given by \cref{tor_def} and \cref{tort_formula} in the appendix.

\begin{figure}[!ht]
    \centering
    \subfloat[]{\label{fig:conc_tau_np}\includegraphics[width=.47\textwidth,height=.4\textwidth]{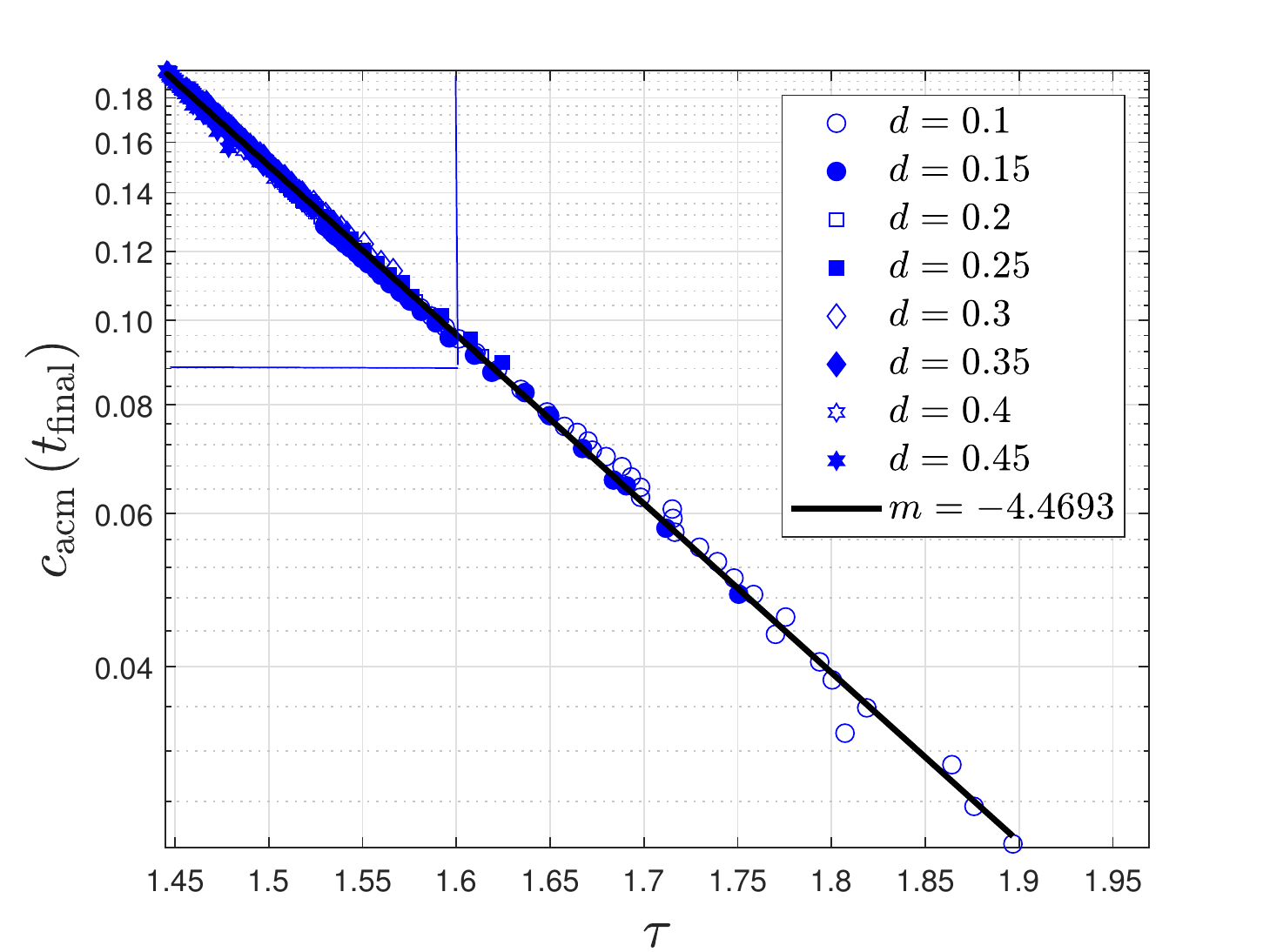}}
    \subfloat[]{\label{fig:conc_tau_p}\includegraphics[width=.47\textwidth,height=.4\textwidth]{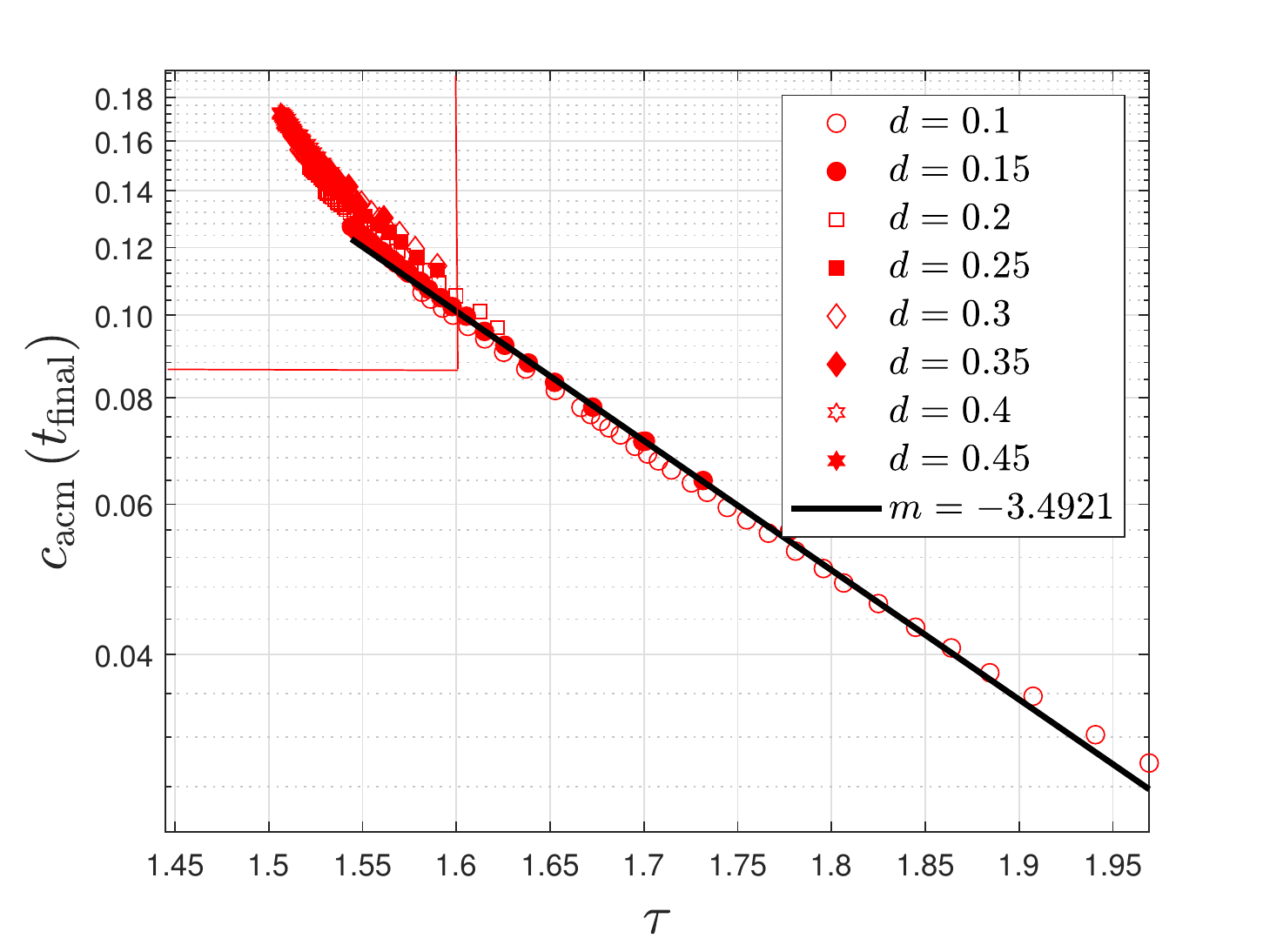}}\\
    \subfloat[]{\label{fig:conc_tau_np_zoom}\includegraphics[width=.47\textwidth,height=.4\textwidth]{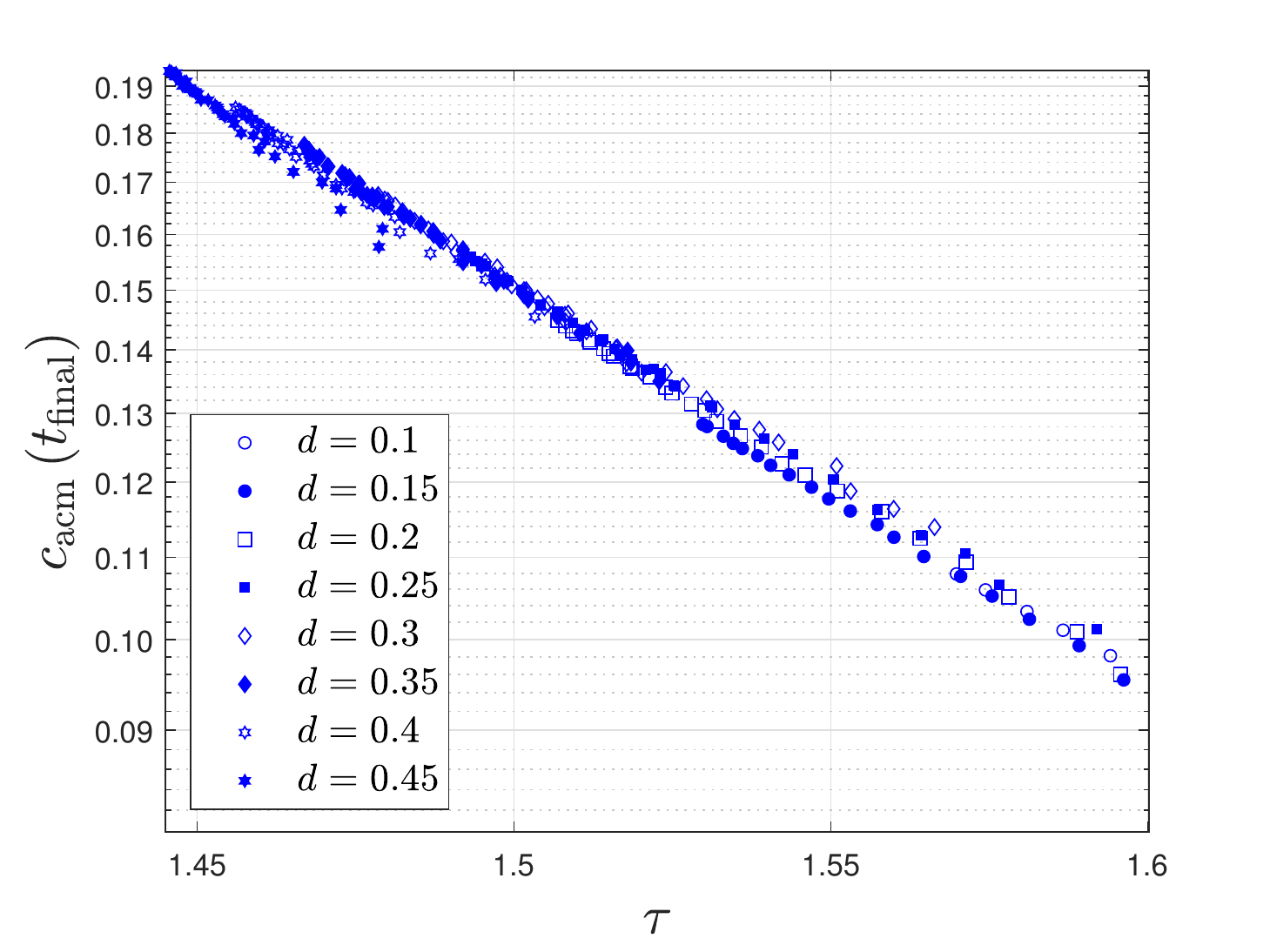}}
    \subfloat[]{\label{fig:conc_tau_p_zoom}\includegraphics[width=.47\textwidth,height=.4\textwidth]{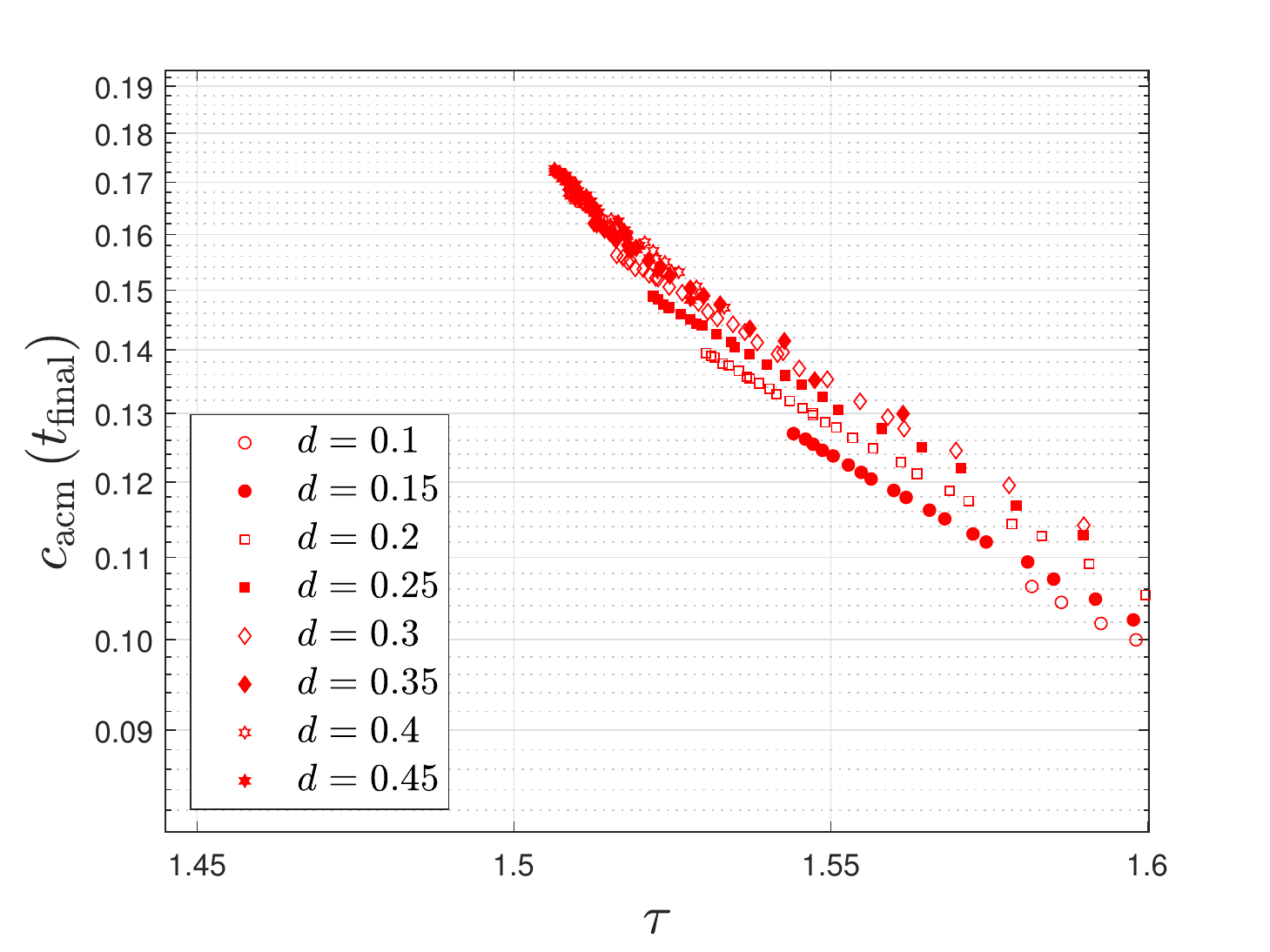}}
    \caption{Final accumulated foulant concentration $c_{\rm acm}\left(t_{\rm final}\right)$ vs tortuosity $\tau$ (semilog plot). (a) {\color{blue}Isolated network setup}; (b) {\color{red}periodic setup}. The line of best fit is in black in each plot, with gradient $m$ given in the legend (with $R^2 = 0.99838$ and $0.9961$ respectively). The {\color{blue}blue} and {\color{red}red} boxes at top left are shown as zooms in (c) and (d) respectively, for small tortuosity values (same data as (a) and (b) respectively). Distribution of error for each data point is given in the histograms in the supplement.}
    \label{fig:conc_tau}
\end{figure}
Our main result is that tortuosity is an important universal parameter that predicts accumulated foulant concentration independently of input parameters $\left(d,N_{\rm total}\right)$. \cref{fig:conc_tau_np} and \ref{fig:conc_tau_p} plot accumulated foulant concentration in the filtrate, $c_{\rm acm}\left(t_{\rm final}\right)$, vs tortuosity, $\tau$, for various values of the search radius $d$. We readily observe the similar collapse of data points in both cases, particularly strongly in the isolated case. We find that $c_{\rm acm}\left(t_{\rm final}\right)$ decays exponentially with $\tau$, with a negative exponent, which implies that networks with more winding flow-paths make much better filters in terms of foulant control. This is because the extent of deposition (per \cref{adv_single}) increases as distance travelled by the fluid increases. We also observe two further details: within the regime where the fit (solid black line) is strongest in the plots, the data for the isolated setup (\cref{fig:conc_tau_np}) has a larger (negative) slope than that in the periodic setup (\cref{fig:conc_tau_p}); and, for fixed $\tau$, filters in the periodic setup have slightly higher accumulated foulant concentration. 

\cref{fig:conc_tau_np_zoom} and \ref{fig:conc_tau_p_zoom} show zoomed plots of the data from \cref{fig:conc_tau_np} and \ref{fig:conc_tau_p} respectively, to probe the details of the low tortuosity regime, where accumulated foulant concentration is highest. To facilitate the discussion below, note that for each $d$, a larger value of $N_{\rm total}$ corresponds to a smaller tortuosity $\tau$ (i.e. adding more vertices to the network decreases average path length). \cref{fig:conc_tau_p_zoom} shows that for the periodic pore network we see an emergent nonlinear trend, breaking the exponential relation between $c_{\rm acm}(t_{\rm final})$ and $\tau$ as $N_{\rm total}$ is increased. Although the isolated setup in \cref{fig:conc_tau_np_zoom} appears to persist in its linear (exponential relationship) trend, we note that for yet larger values of $N_{\rm total}$ this trend must break down due to the existence of a lower bound, $\tau_{\rm min}$, on tortuosity. While there is a trivial lower bound, $\tau_{\rm min} \geq  1$ for each setup, in \cref{app:B} we show that bounds can be tightened to approximately $1.128$ for the isolated setup and $1.304$ for the periodic one. The bound is larger for the periodic case because, for a fixed set of vertices, the additional paths obtained from the periodic setup are on average longer, as they can penetrate through the boundaries; see \cref{app:B}. Our simulations cannot access the breakdown in the power-law for the isolated case because, for the large values of $N_{\rm total}$ required to access this regime, we increasingly often violate the volume constraint for individual realizations of the random network.

\begin{figure}[!ht]
    \centering
    \subfloat[]{\label{fig:tt_tau_np}\includegraphics[width=.47\textwidth,height=.4\textwidth]{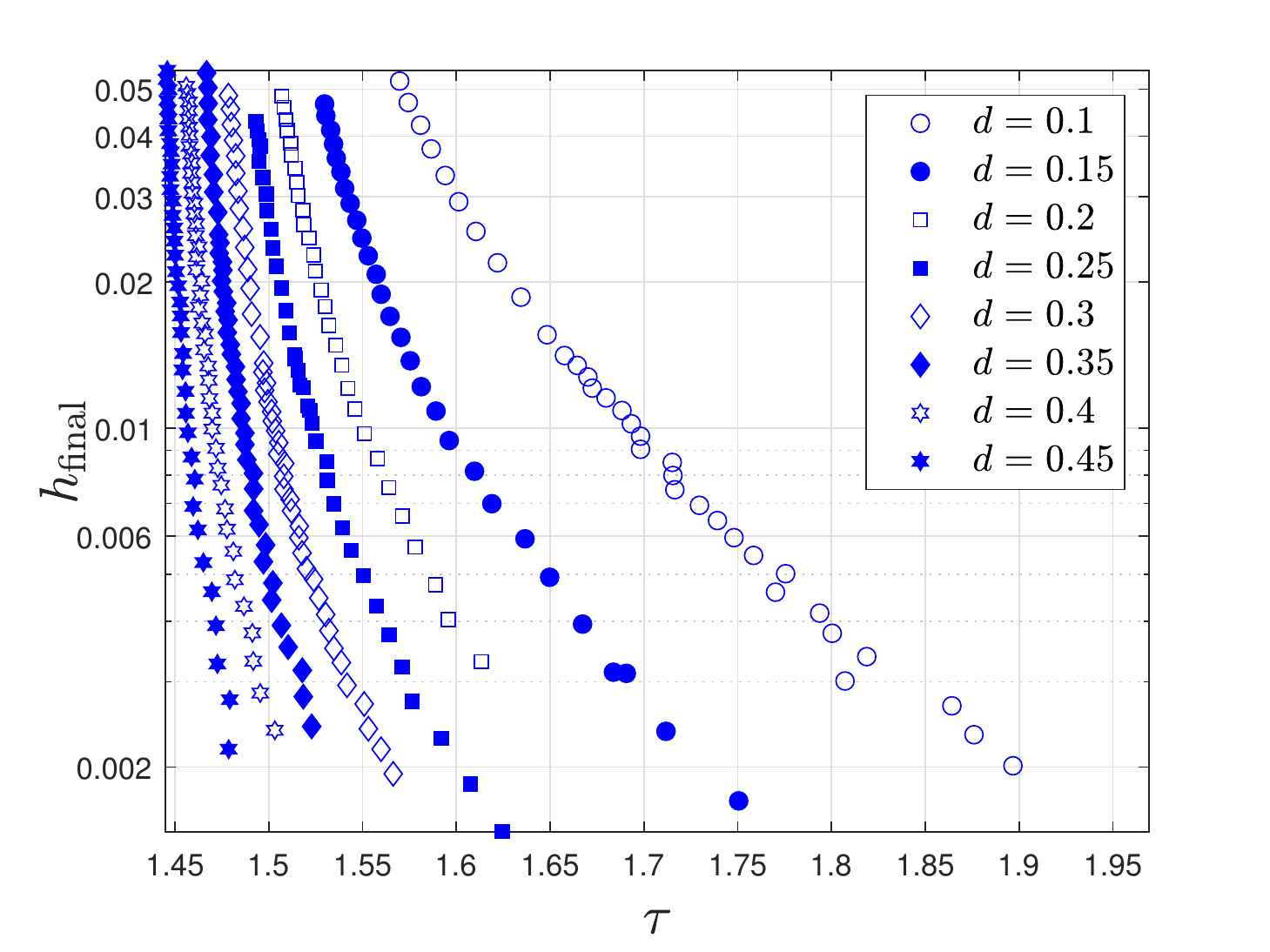}}
    \subfloat[]{\label{fig:tt_tau_p}\includegraphics[width=.47\textwidth,height=.4\textwidth]{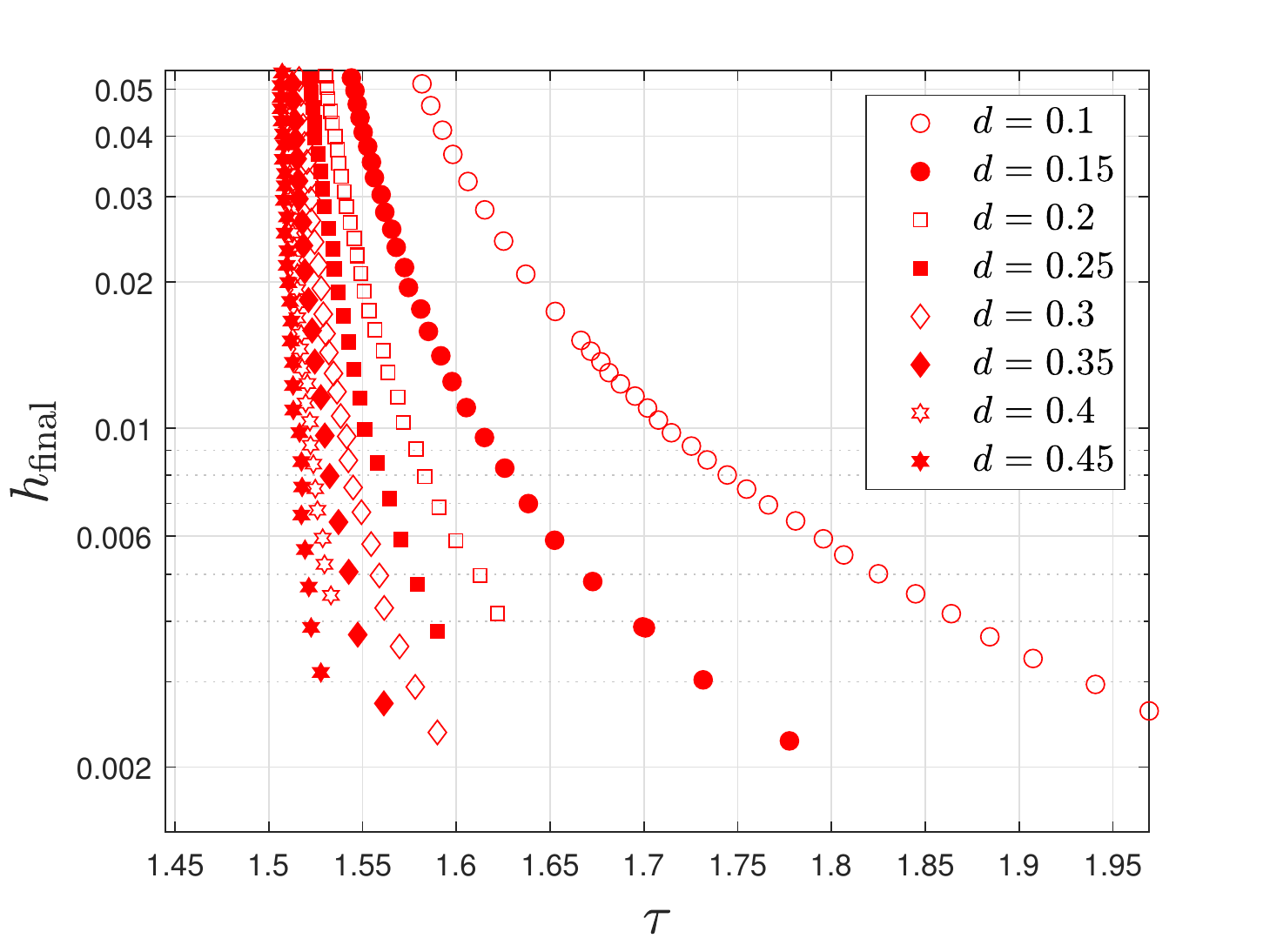}}
    \caption{Total throughput $h_{\rm final}$ vs tortuosity $\tau$ (semilog plot). (a) {\color{blue}Isolated setup}; (b) {\color{red}periodic setup}. Error distribution for each data point is provided in the supplement. Same scale for $h_{\rm final}$ as in \cref{fig:tt_vol}.}
    \label{fig:tt_tau}
\end{figure}

\cref{fig:tt_tau} shows total throughput $h_{\rm final}$ vs tortuosity $\tau$ for various $d$ values, on a semilog scale. We note that $h_{\rm final}$ decreases as $\tau$ increases for each $d$ because, if feed solution traverses longer paths (on average) through the filter, then it deposits more foulant and thus pores close faster. Second, for fixed $h_{\rm final}$, larger $d$ corresponds to smaller $\tau$. Combining this finding with the results in \cref{fig:conc_tau}, we reach the following conclusion: between two membrane networks that produce the same total throughput, the one with shorter characteristic pore length (smaller $d$) is favoured since it also has larger tortuosity and thus better foulant control. Lastly, in both \cref{fig:tt_tau_np} and \cref{fig:tt_tau_p}, we see again the clear evidence of a limiting tortuosity $\tau_{\rm min}>1$: in the data curves for each $d \geq 0.35$, $\tau$ does not vary greatly as we vary $N_{\rm total}$.

Now, we highlight the differences in \cref{fig:tt_tau}: first, for a fixed tortuosity $\tau$ (at a given $d$ value), $h_{\rm final}$ in the isolated setup (\cref{fig:tt_tau_np}) is lower than in the periodic one (\cref{fig:tt_tau_p}), because the periodic boundary conditions give rise to more edges and hence a larger initial void volume ${\rm Vol}_0$. We confirm this reasoning by plotting ${\rm Vol}_0$ against $\tau$ in \cref{fig:vol_tau}. There, when we fix $\tau$ for each value of $d$, we observe a higher initial volume for the periodic case than the isolated one. This observation also explains why accumulated foulant concentration is higher in the periodic case for a fixed $\tau$ (per \cref{fig:conc_tau}) because networks with periodic boundary conditions, by having larger initial volume, process more filtrate (larger throughput) which leads to a greater quantity of foulant escaping the filter. This reasoning also explains the difference between $c_{\rm acm}\left(t_{\rm final}\right)$ in \cref{fig:conc_vol_np} and \cref{fig:conc_vol_p} by fixing ${\rm Vol}_0$ for each $d$.
\begin{figure}[!ht]
    \centering
    \subfloat[]{\label{fig:vol_tau_np}\includegraphics[width=.47\textwidth,height=.4\textwidth]{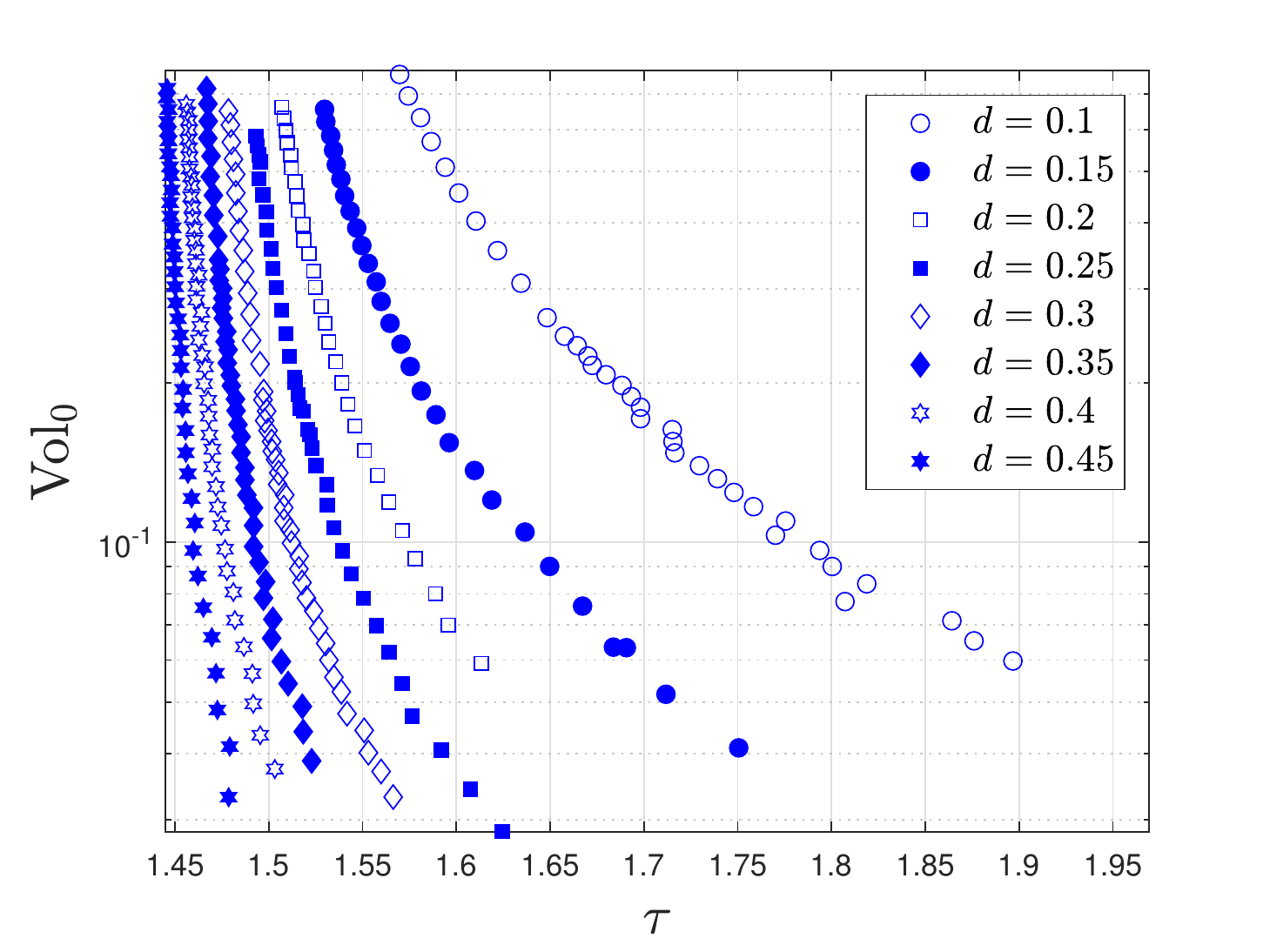}}
    \subfloat[]{\label{fig:vol_tau_p}\includegraphics[width=.47\textwidth,height=.4\textwidth]{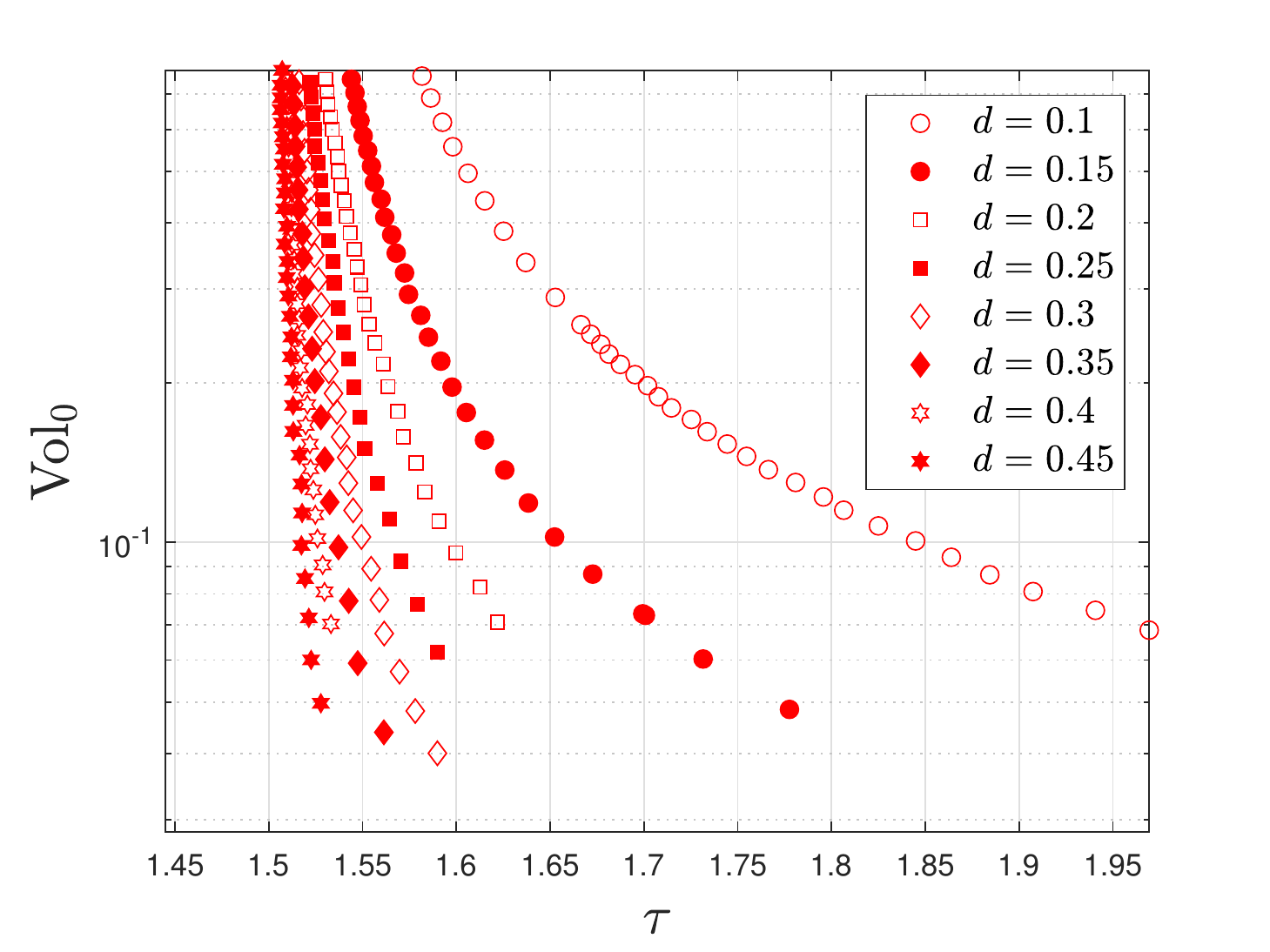}}
    \caption{Initial void volume vs tortuosity (semilog plot). (a) {\color{blue}Isolated setup}; (b) {\color{red}periodic setup}. Error distribution for each data point is provided in the supplement.}
    \label{fig:vol_tau}
\end{figure}

\section{Conclusions}
\label{sec:6}
We here summarize our findings on the connections between performance metrics of the network/filter and each characteristic geometric network parameter, and highlight areas for future work. We first collect our main findings:
\begin{enumerate}
    \item Initial void volume ${\rm Vol}_0$ is a good predictor of total throughput $h_{\rm final}$, particularly when ${\rm Vol}_0> 0.5$, when there appears to be a power-law relation between the two quantities (per \cref{fig:tt_vol}). Average number of neighbors $N$ is also correlated with $h_{\rm final}$, but is a weaker predictor in this respect than Vol$_0$ (per \cref{fig:tt_nbar}).
    \item Tortuosity $\tau$ fully characterizes final accumulated foulant concentration \newline $c_{\rm acm}\left(t_{\rm final}\right)$ in the filtrate by a negative exponential relationship. See \cref{fig:conc_tau}.
    \item When a minimum concentration requirement is imposed, membrane filters with small characteristic pore length should be considered since they have larger void volume and thus process more filtrate. See \cref{fig:conc_vol}.
    \item When two membrane networks produce the same final throughput $h_{\rm final}$, the one with shorter characteristic pore length should be favored, since it will have higher tortuosity and thus better foulant control. See \cref{fig:vol_tau}. 
\end{enumerate}
Griffiths {\it et al.} \cite{griffith_jms} showed that the particle removal efficiency, corresponding equivalently to $1-c_{\rm acm}\left(t_{\rm final}\right)$ in our setup, increases as tortuosity increases, a trend consistent with our conclusion here regarding the negative exponential relationship between accumulated foulant concentration and tortuosity, though the exact relationship in both approaches is different (negative sub-exponential in that work \cite{griffith_jms}). Such differences may originate from the different graph generation protocols used in the two works. 

We also find that the choice of network connection metric (isolated versus periodic) may significantly affect how the network functions as a filter. With identical input parameters $N_{\rm total}$ (representing the number of initial random points generated) and $d$ (the search radius, or maximum edge length), the periodic setup yields larger average number of neighbors and thus higher connectivity. This is important, since the periodic setup provides a more 
realistic representation of a (large) membrane. 

In recent years, many modern imaging techniques to study material properties have been developed which, in turn, provide fertile ground for graphical modeling approaches to membrane filtration and analysis of membrane performance. For example, Martinez {\it et al.} \cite{martinez} have outlined a topological algorithm (Figure 1 in their paper) that translates a
two-dimensional image of a membrane cross-section into a pore network. The idea is to take an original binary image where black and white represent membrane void and materials (respectively), and then construct a distance-to-nearest-object function (the DNO function) that computes the distance between a pixel to its nearest object (membrane material). Then, the location of pore junctions is determined by the local maxima of the DNO function, while the width of the edges connecting the junctions have widths determined by the minimal distance normal to the edge. For other techniques, see Sun {\it et al.} \cite{image_2007} and Sundaramoorthi {\it et al.}~\cite{3d_image}.

In future work, as well as utilizing some of the image analysis methods outlined above to generate more realistic networks, we plan to consider network models of membrane filters that include pore size variations, providing a more accurate representation of real membrane filters, due to inevitable manufacturing defects/inhomogeneities. We plan also to incorporate other fouling modes, such as blocking or sieving by large particles that may occlude pores, unaccounted for in this work. Including such large-particle blocking will add new dimensions of stochastic complexity to our modeling. In particular, a model that includes both extensions listed above may exhibit internal blockage at an earlier stage than the closure of top inlets (as always observed in our work here) when pore-size variations are sufficiently large, leading to much more complex flow and fouling behaviour. Our graphical description of the problem will allow us to investigate problems of this type in a systematic and computationally-efficient manner. 

\appendix
\section{Justification for pore radius evolution model}
\label{app:A}

In this section, we justify the form of the pore radius evolution equation \cref{adsorption} using the (exact) solution for the foulant concentration model, \cref{adv_single}, to relate the rate of change of particle volume accretion inside a single pore to pore radius shrinkage. In the following derivation, we drop the indices $ij$ for notational simplicity and assume that the radius $R$ is spatially constant at each time $T$ (though the arguments can be adapted to the variable radius case with a suitable bound on $\partial R/\partial Y$).

Let $V_p$ be the particle volume. In a single pore (assumed circularly cylindrical), we consider an infinitesimally thin circular disk at distance $Y$ from the pore inlet, with thickness $dY$. The particle flux difference across this disk is 
\begin{equation}
    \left(C\left(Y+dY,T\right)-C\left(Y,T\right)\right)\mathbf{Q}\left(T\right)\approx\frac{\partial C}{\partial Y}\left(Y,T\right)\mathbf{Q}\left(T\right)dY,
    \label{eq:adsorped}
\end{equation}
i.e., the number of particles per time deposited in the disk. The total particle volume adsorped in this pore is $V_p$ times this quantity. We integrate \cref{eq:adsorped} over the length of the pore to obtain the total volume of deposited particles per time in the pore,
\begin{subequations}
    \begin{align}
    V_{p}\mathbf{Q}\left(T\right)\int_{0}^{A}\frac{\partial C}{\partial Y}\left(Y,T\right)dY
     & =V_{p}\mathbf{Q}\left(T\right)\left[C\left(A,T\right)-C\left(0,T\right)\right]\\
     & =V_{p}\mathbf{Q}\left(T\right)\boldsymbol{C}\left(T\right)\left(\exp\left(-\frac{\Lambda R\left(T\right)A}{\mathbf{Q}\left(T\right)}\right)-1\right) \\
     & \approx -V_{p}\boldsymbol{C}\left(T\right)\Lambda R\left(T\right)A
     \label{eq:part_accret}
    \end{align}
\end{subequations}
where we used the analytical expression \cref{adv_single} for $C\left(Y,T\right)$ in the second equality. The final approximate Taylor expansion \cref{eq:part_accret} is justified for sufficiently small values of the exponent, that is (using the scales in \cref{sec:3}), provided
\begin{equation}
    \mathbf{q}\left(t\right)\gg\lambda r_{0}d,
\end{equation}
where $d$ is the largest pore length (see \cref{edge}).

As foulant particle volume accumulates at the rate given in \cref{eq:part_accret}, pore volume ${\rm Vol}\left(T\right)$ also changes at this rate via
\begin{equation}
    \frac{d{\rm Vol}\left(T\right)}{dT} =-V_{p}\boldsymbol{C}\left(T\right)\Lambda R\left(T\right)A.
    \label{eq:particle_volume_rate}
\end{equation}
At the same time, we relate the volume of the pore to its radius ${\rm Vol}\left(T\right) = \pi R^2\left(T\right) A$ and obtain
\begin{equation}
    \frac{d{\rm Vol}\left(T\right)}{dT}=2\pi R\left(T\right)\frac{dR}{dT}A.   
    \label{eq:pore_volume_rate}
\end{equation}
Equating \cref{eq:particle_volume_rate} and \cref{eq:pore_volume_rate}, we arrive at the form of \cref{adsorption} with $\alpha = \frac{V_p}{2\pi}$.

\section{Tortuosity} \label{app:B}

In this section, we define tortuosity of a graph representing a membrane filter pore network, and provide an explicit formula using the geometric and initial flow information from the network (found in \cref{sec:2}). In all of our investigations we restrict attention to the tortuosity of the initial pore network, with no regard for how it subsequently evolves under fouling, hence in the following discussion it should be understood that we consider properties of the network at time $t=0$.

We define tortuosity as the average distance travelled by a fluid particle through the membrane via the network, relative to the thickness of membrane $W$. Now, given a path from any inlet to any outlet, we can associate it with its total initial flux, which we use as a weight for the path. This is equivalent to having the fluid particle perform a discrete random walk on the graph $G$ directed by fluid flux at each junction. More precisely, the transition matrix $\mathbf{P}$ of this random walk is defined as follows. We omit the argument of $t=0$ for notational simplicity.

We first enforce non-negativity of the flux matrix via the following modification. Consider $\mathbf{Q}_{+}$ and $\mathbf{Q}_{-}$, the positive and negative parts of $\mathbf{Q}$ respectively, such that $\mathbf{Q} = \mathbf{Q}_{+}+\mathbf{Q}_{-}$. Owing to the skew-symmetry of $\mathbf{Q}$, we construct 
\begin{equation}
    \mathbf{\tilde{Q}} = \mathbf{Q}_{+}-\mathbf{Q}_{-}^{T},
    \label{eq:mod_Q}
\end{equation}
which preserves the flow information (direction and magnitude) while enforcing non-negativity.
\begin{definition}(Transition Matrix)
Given modified flux matrix $\mathbf{\tilde{Q}}$ in \cref{eq:mod_Q}, the transition matrix $\mathbf{P}$ is determined by rescaling $\mathbf{\tilde{Q}}$ by its row sum: 
\begin{align}
\mathbf{P}_{ij}=\begin{cases}
\frac{{\displaystyle \mathbf{\tilde{Q}}_{ij}}}{{\displaystyle \sum_{v_{j}:\left(v_{i},v_{j}\right)\in E}\mathbf{\tilde{Q}}_{ij}}}, & \left(v_{i},v_{j}\right)\in E,\quad v_{i}\in V_{{\rm top}}\cup V_{{\rm int}},\\
1, & v_{i}\in V_{{\rm bot}},\quad j=i,\\
0, & \text{otherwise.}
\end{cases}
\end{align}
For vertices in the bottom surface, fluid particles are absorbed, i.e. once they reach any $v\in V_{\rm bot}$, they stay there with probability one. 
\end{definition}
Let $X_n$ be a random walk with transition $\mathbf{P}$, i.e. $X_n$ is the vertex after the fluid particle has taken $n$ steps on $V$. Let $\mathbb{P}$ be the induced probability measure. This random walk has a natural initial distribution (a column vector of length $\left|V\right|$), 
\begin{equation}
    \pi_{0i}=\mathbb{P}\left\{X_0 = v_i\right\}:=
    \begin{cases}
\frac{{\displaystyle \sum_{v_{j}:\left(v_{i},v_{j}\right)\in E}\mathbf{\tilde{Q}}_{ij}}}{{\displaystyle \sum_{v_{i}\in V_{{\rm top}}}\sum_{v_{j}:\left(v_{i},v_{j}\right)\in E}\mathbf{\tilde{Q}}_{ij}}}, & v_{i}\in V_{{\rm top}},\\
0, & \text{otherwise,}
    \end{cases}
    \label{eq:initial_dist}
\end{equation}
namely, the probability of going to each inlet on the top surface is determined by the proportion of flux going through that inlet, relative to total flux. 

\begin{definition}(Tortuosity)\label{tor_def}
Let $l_n$ be the total path length after the random walk has taken $n$ steps. Tortuosity of the graph $G$ is defined by $\mathbb{E}\left[\tau\right]$ where $\tau := \frac{l_m}{W}$
and $m$ is the (deterministic) number of steps of the longest path from any inlet on the top surface to any outlet in the bottom surface.
\end{definition}
The integer $m$ can be understood as a graph diameter where the notion of graph distance for $m$ is encoded in the unweighted adjacency matrix $\mathbf{W}$ (entries are indicators of the existence of an edge). $m$ is trivially bounded above by $\left|V_{\rm int}\right|+2$ (one step from $V_{\rm top}$ to $V_{\rm int}$ and $V_{\rm int}$ to $V_{\rm bot}$ respectively, and traverse all of $V_{\rm int}$ at worst), which we use here. Although this bound can be tightened by connectivity measures such as the smallest number of vertices that must be removed to disconnect a graph (see Coll {\it et al.}~\cite{Coll2020ThePD}, for example), our algorithm (discussed after proving the formula \cref{tort_formula}) for $\mathbb{E}\left[\tau\right]$ does not incur significant computational cost from the size of $m$.

One may estimate this expected value by sending a large number of particles through the network and computing the average of path lengths. We here provide an explicit formula for $\mathbb{E}\left[\tau \right]$ that depends on the transition matrix $\mathbf{P}$ and a distance weight matrix $\mathbf{W}_E$, whose entries are
\[
\mathbf{W}_{E,ij}=\begin{cases}
\chi\left(v_{i},v_{j}\right), & \left(v_{i},v_{j}\right)\in E,\\
0, & \text{otherwise,}
\end{cases}
\]
where $\chi\left(v_{i},v_{j}\right)$ is the distance between vertices $v_i$ and $v_j$ via the metric $\chi$ defined in \cref{metric}. Using this formula directly obviates the use of large-number-of-particle simulations and thus reduces computational load significantly.

\begin{proposition}(Tortuosity formula)
\begin{equation}
    \mathbb{E}\left[\tau\right]=\frac{1}{W}\mathbb{E}\left[l_m\right]=\frac{\pi_{0}^{{\rm T}}}{W}\left(\sum_{n=1}^{m}\mathbf{P}^{n-1}\right){\rm diag}\left(\mathbf{P}\mathbf{W}_{E}\right),
    \label{tort_formula}
\end{equation}
where ${\rm diag}\left(A\right)$ is a column vector that lists the diagonal elements of a matrix $A$.
\end{proposition}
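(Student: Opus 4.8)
The plan is to compute the expectation $\mathbb{E}[l_m]$ by decomposing the total path length after $m$ steps into a sum of increments and then evaluating each increment's contribution via the law of total expectation, conditioning on the walk's position. First I would write $l_m = \sum_{n=1}^{m} \Delta_n$, where $\Delta_n$ is the length of the edge traversed on the $n$-th step of the random walk, i.e. $\Delta_n = \mathbf{W}_{E}(X_{n-1}, X_n)$ (with the convention that $\mathbf{W}_E(v,v) = 0$ so that absorbed particles sitting at a bottom vertex contribute nothing for the remaining steps). By linearity of expectation, $\mathbb{E}[l_m] = \sum_{n=1}^{m} \mathbb{E}[\Delta_n]$, so it suffices to evaluate each $\mathbb{E}[\Delta_n]$.

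Next I would condition on the position $X_{n-1}$. Given $X_{n-1} = v_i$, the next vertex $X_n = v_j$ is chosen with probability $\mathbf{P}_{ij}$, so the conditional expected increment is $\mathbb{E}[\Delta_n \mid X_{n-1} = v_i] = \sum_{j} \mathbf{P}_{ij} \mathbf{W}_{E,ij} = (\mathbf{P} \mathbf{W}_E^{\mathrm T})_{ii}$ — but since $\mathbf{W}_E$ is symmetric this is exactly the $i$-th diagonal entry of $\mathbf{P}\mathbf{W}_E$, i.e. $\left(\mathrm{diag}(\mathbf{P}\mathbf{W}_E)\right)_i$. Then, writing $\pi_{n-1}$ for the row-vector distribution of $X_{n-1}$, the law of total expectation gives $\mathbb{E}[\Delta_n] = \sum_i (\pi_{n-1})_i \left(\mathrm{diag}(\mathbf{P}\mathbf{W}_E)\right)_i = \pi_{n-1}\, \mathrm{diag}(\mathbf{P}\mathbf{W}_E)$. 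The Markov property propagates the initial distribution forward as $\pi_{n-1}^{\mathrm T} = \pi_0^{\mathrm T} \mathbf{P}^{n-1}$ (taking $\pi_0$ as in \cref{eq:initial_dist} and noting $\mathbf{P}^0 = I$). Summing over $n$ from $1$ to $m$ yields $\mathbb{E}[l_m] = \pi_0^{\mathrm T}\left(\sum_{n=1}^m \mathbf{P}^{n-1}\right)\mathrm{diag}(\mathbf{P}\mathbf{W}_E)$, and dividing by $W$ gives \cref{tort_formula}.

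The one genuine subtlety — and the step I expect to need the most care — is the treatment of the absorbing bottom vertices and the interplay between the \emph{fixed} number of steps $m$ (the length of the longest inlet-to-outlet path in the unweighted graph) and paths that are strictly shorter. A fluid particle that reaches $V_{\rm bot}$ after $k < m$ steps continues to "take steps" in the formal random walk, but each such step is a self-loop at a bottom vertex; I must confirm that $\mathbf{W}_{E}$ assigns weight zero to these self-loops (which it does, since $(v_i,v_i)\notin E$ so $\mathbf{W}_{E,ii}=0$), so that $\Delta_n = 0$ almost surely once absorption has occurred and $l_m$ correctly records the true path length regardless of when the particle arrived. I should also check that $m$ as defined is large enough that \emph{every} flow path has terminated by step $m$ — this follows from $m$ being the number of steps in the longest such path together with the fact that the flux-directed walk cannot revisit vertices against the pressure gradient, so all realized paths have length at most $m$. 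Once these bookkeeping points are nailed down, the computation above is entirely routine linear algebra and elementary probability.
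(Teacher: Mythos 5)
Your proposal is correct and follows essentially the same route as the paper: decompose $l_m$ into per-step increments, condition on the walk's position at step $n-1$, use the symmetry of $\mathbf{W}_E$ to identify the conditional expected increment with $\bigl(\mathrm{diag}\left(\mathbf{P}\mathbf{W}_{E}\right)\bigr)_i$, and propagate the initial distribution via $\mathbf{P}^{n-1}$. The only (cosmetic) difference is that the paper first conditions on $X_0$ and computes $\mathbb{E}_{i}\left[l_m\right]$ componentwise before assembling $\pi_0^{\rm T}\boldsymbol{U}$, whereas you work directly with the marginal distribution of $X_{n-1}$; your explicit bookkeeping about the absorbing bottom vertices (that $\mathbf{W}_{E,ii}=0$ so post-absorption steps contribute nothing) is a point the paper leaves implicit.
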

\begin{proof}
We compute $\mathbb{E}\left[l_m\right]$. Denote the conditional probability and expectation
\[
\mathbb{P}\left[\cdot\mid X_{0}=i\right]=\mathbb{P}_{i}\left[\cdot\right],\quad\mathbb{E}\left[\cdot\mid X_{0}=i\right]=\mathbb{E}_{i}\left[\cdot\right].
\]
First, we observe by law of total expectation that
\[
\mathbb{E}\left[l_m\right]=\sum_{i=1}^{\left|V\right|}\mathbb{E}\left[l_m\mid X_{0}=i\right]\mathbb{P}\left\{ X_{0}=i\right\} =\sum_{i=1}^{\left|V\right|}\mathbb{E}_{i}\left[l_m\right]\pi_{0i}:=\pi_0^{\rm T} \boldsymbol{U}, 
\]
where $\pi_0$ is given by \cref{eq:initial_dist} and $\boldsymbol{U} := \left(\mathbb{E}_{i}\left[l_m\right]\right)_{v_i\in V}$. We now focus on an arbitrary element of $\boldsymbol{U}$. Noting that $\mathbb{E}_{i}\left[L_{0}\right]=0$, and using linearity of expectation, law of total expectation, the Markov property of the random walk and symmetry of $\mathbf{W}_{E}$, we have
\begin{gather*}
\mathbb{E}_{i}\left[l_{m}\right]=\mathbb{E}_{i}\left[\sum_{n=1}^{m}\left(l_{n}-l_{n-1}\right)\right] =\sum_{n=1}^{m}\mathbb{E}_{i}\left(l_{n}-l_{n-1}\right)\\
 =\sum_{n=1}^{m}\sum_{j\in V}\mathbb{E}_{i}\left(l_{n}-l_{n-1}\mid X_{n-1}=j\right)\mathbb{P}_{i}\left(X_{n-1}=j\right)\\
 =\sum_{n=1}^{m}\sum_{j\in V}\left(\sum_{k\in V}\mathbf{W}_{E,jk}\mathbf{P}_{jk}\right)\mathbf{P}_{ij}^{\left(n-1\right)}
 =\sum_{n=1}^{m}\sum_{j\in V}\mathbf{P}_{ij}^{\left(n-1\right)}\left(\mathbf{P}\mathbf{W}_{E}\right)_{jj}
\end{gather*}
where $\mathbf{P}_{ij}^{\left(k\right)}$ is the $k$-th iterate of $\mathbf{P}$. Thus, in matrix form, 
\begin{equation*}
\boldsymbol{U} =\sum_{n=1}^{m}\sum_{j\in V}\left(\begin{array}{c}
\mathbf{P}_{1j}^{\left(n-1\right)}\\
.\\
.\\
.\\
\mathbf{P}_{\left|V\right|j}^{\left(n-1\right)}
\end{array}\right)\left(\mathbf{P}\mathbf{W}_{E}\right)_{jj}
 =\left(\sum_{n=1}^{m}\mathbf{P}^{n-1}\right)\text{diag}\left(\mathbf{P}\mathbf{W}_{E}\right),
\end{equation*}
using the fact that $\mathbf{P}^{\left(n-1\right)} =\mathbf{P}^{n-1}$, completing the proof.
\end{proof}
In practice, to avoid taking large matrix powers when evaluating \cref{tort_formula}, we have devised a fast algorithm in evaluating the geometric series $\sum_{n=1}^{m}\mathbf{P}^{n-1}$, by appealing to a geometric sum formula on matrices involving matrix inversions. We utilize the block upper triangular structure of $\mathbf{P}$ by block partitioning into components (known as a {\it divide and conquer}-type scheme) including the identity block in its southeast corner, to ease the computational load of inversions. Naive evaluation of the series is of complexity $O\left(m\left|V\right|^3\right)$ (cubic term from matrix multiplication and $m$ additions) while our algorithm is $O\left(\left|V\right|^3\right)$. 

We argue that the constant initial radius assumption on the pores deems $\tau$ a geometric parameter independent of fluid flow, even though its definition requires initial flow information and geometric information such as a distance weighted adjacency (see \cref{tort_formula}). In essence, we claim that $\tau$ does not vary too much until the filtration stopping criterion. Firstly, foulant concentration is monotonically decreasing along each edge, and thus the radii of all inlets, under the influence of foulant concentration in the feed solution (see \cref{dimless_conc_bc}), go to zero earlier than all other downstream channels. Thus, adjacencies of the network do not change until the filter top surface is clogged. Secondly, though outflowing flux from an arbitrary junction changes over time as the filter fouls, the relative contribution from each outgoing edge does not vary greatly. Altogether, we believe that tortuosity does not depend heavily on the time of filtration but only on the initial geometry of the network. This feature makes tortuosity a universal parameter for foulant concentration.

Lastly, we note that a theoretical limit exists for tortuosity for both setups. As $N_{\rm total}\rightarrow \infty$ (so that the membrane interior, top and bottom surface are uniformly densely packed with pore junctions), we can provide a simple lower bound for the limit in the following sense. Consider an arbitrary inlet-outlet pair. They will be connected by a path with vertical component $1$. For the isolated setup, the horizontal component can be estimated as the average distance between two uniformly random points in 2D in a unit square, and is about $0.521$ (length of {\color{blue}$YK$} in \cref{fig:np_tort}). Together, these numbers provide a lower bound for the limiting tortuosity $\tau_{\rm min}$ of around $1.128 = \sqrt{1^2+0.521^2}$ for the isolated case. The argument for the periodic case is slightly more elaborate -- while the vertical component of an average path is still $1$, the horizontal component now is the average distance between two random points uniformly sampled from the squares $\left[0.5,1.5\right]^2$ and $\left[0,2\right]^2$ respectively (found to be $0.838$ (length of {\color{blue}$YK$} in \cref{fig:p_tort})); thus an average path length is about $1.304$. The difference arises because with the periodic metric, any inlet uniformly sampled from a unit square can potentially connect to outlets that are outside the unit square (bottom surface) but within $0.5$ distance to the boundary (due to the constraint that search radius $d<0.5$). These arguments provide some justification for the observations that the tortuosity in the periodic case may be larger than the isolated case. The numerical values given above are obtained by standard probabilistic calculations and numerical integration. We look forward to a more theoretical study of how tortuosity is affected by initial number of points, search radius and the underlying metric. 
\begin{figure}
    \centering
    \subfloat[]{\label{fig:np_tort}\includegraphics[scale = 0.25]{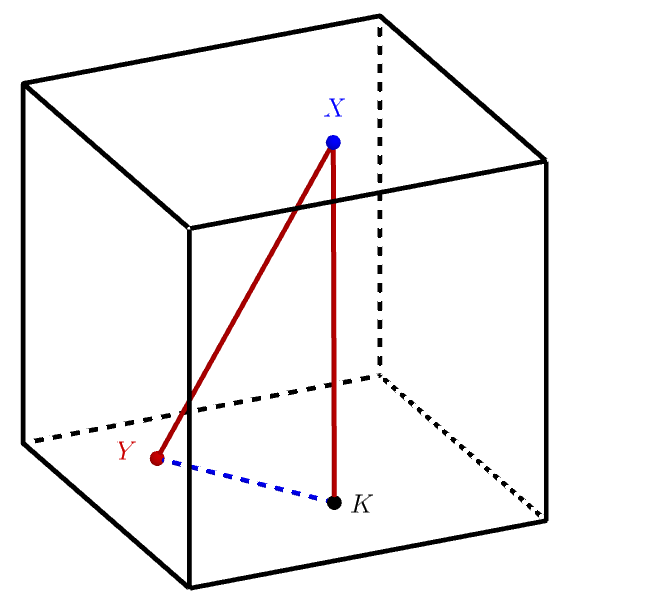}}
    \subfloat[]{\label{fig:p_tort}\includegraphics[scale = 0.25]{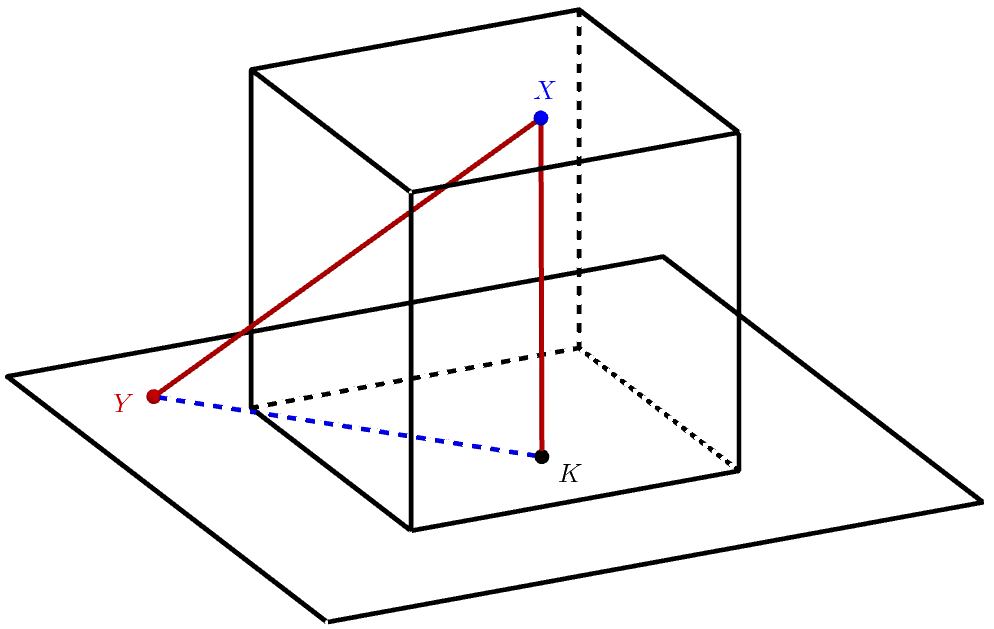}}
    \caption{Tortuosity limit: (a) Isolated setup (unit cube); (b) periodic setup (unit cube and a square of side length $2$ containing the bottom surface of the unit cube). {\color{blue}$X$ (blue)} and {\color{red}$Y$ (red)} are uniformly sampled from the top and bottom membrane surface respectively. $K$ (black) in both figures is the projection of $X$ onto the bottom surface; {\color{red}$XK$} in both figures has length $1$.}
    \label{fig:tort_schematic}
\end{figure}

\section{Worked Example} \label{app:C}
In this section, we provide an example of a simple network (see \cref{fig:inverted_Y}) to illustrate how the governing equations described in \S\S 2.3--2.6 depend on the model parameters. The network is a reflected Y-shape, with two inlets and outlets of length ${1/2}$ and one interior edge of length ${1/3}$. 

Each edge has conductance $\mathbf{k}_{ij}\left(t\right)$ per \cref{eq:scaling}. For this network, the conductance-weighted graph Laplacian, acting on the pressures at interior vertices, yields the graph Laplace equation \cref{dis_lp},
\begin{multline}
L_{\mathbf{k}}\boldsymbol{p}\left(t\right)=\left[\begin{array}{cc}
\mathbf{k}_{13}\left(t\right)+\mathbf{k}_{23}\left(t\right)+\mathbf{k}_{34}\left(t\right) & -\mathbf{k}_{34}\left(t\right)\\
-\mathbf{k}_{34}\left(t\right) & \mathbf{k}_{34}\left(t\right)+\mathbf{k}_{45}\left(t\right)+\mathbf{k}_{46}\left(t\right)
\end{array}\right]\left[\begin{array}{c}
p_{2}\left(t\right)\\
p_{3}\left(t\right)
\end{array}\right]\\
=\left[\begin{array}{c}
\mathbf{k}_{13}\left(t\right)+\mathbf{k}_{23}\left(t\right)\\
0
\end{array}\right]
\label{eq:lp_example}
\end{multline}
where the final equality incorporates the boundary conditions \cref{dimless_laplace_bc}. Then, the fluxes $\mathbf{q}_{ij}$ satisfy 
\begin{align*}
\mathbf{q}_{13}\left(t\right) & =\mathbf{k}_{13}\left(t\right)\left(1-p_{3}\left(t\right)\right),\\
\mathbf{q}_{23}\left(t\right) & =\mathbf{k}_{23}\left(t\right)\left(1-p_{3}\left(t\right)\right),\\
\mathbf{q}_{34}\left(t\right) & =\mathbf{k}_{34}\left(t\right)\left(p_{3}\left(t\right)-p_{4}\left(t\right)\right),\\
\mathbf{q}_{45}\left(t\right) & =\mathbf{k}_{45}\left(t\right)p_{4}\left(t\right),\\
\mathbf{q}_{46}\left(t\right) & =\mathbf{k}_{46}\left(t\right)p_{4}\left(t\right).
\end{align*}
To proceed, we then solve \cref{dimless_conc} (the advection graph Laplace equation) to find the foulant concentration at each vertex,
\begin{multline}
L_{\mathbf{q}}^{{\rm in}}\boldsymbol{c}=\left[\begin{array}{cccc}
\mathbf{q}_{13}+\mathbf{q}_{23} & 0 & 0 & 0\\
-\mathbf{q}_{34}\mathbf{b}_{34} & \mathbf{q}_{34} & 0 & 0\\
0 & -\mathbf{q}_{45}\mathbf{b}_{45} & \mathbf{q}_{45} & 0\\
0 & -\mathbf{q}_{46}\mathbf{b}_{46} &  & \mathbf{q}_{46}
\end{array}\right]\left[\begin{array}{c}
c_{3}\left(t\right)\\
c_{4}\left(t\right)\\
c_{5}\left(t\right)\\
c_{6}\left(t\right)
\end{array}\right]=\\
\left(\mathbf{q}\circ\mathbf{b}\right)^{{\rm T}}\boldsymbol{c}_{0}=\left[\begin{array}{cc}
\mathbf{q}_{13}\mathbf{b}_{13} & \mathbf{q}_{23}\mathbf{b}_{23}\\
0 & 0\\
0 & 0\\
0 & 0
\end{array}\right]\left[\begin{array}{c}
1\\
1
\end{array}\right]=\left[\begin{array}{c}
\begin{array}{c}
\mathbf{q}_{13}\mathbf{b}_{13}+\mathbf{q}_{23}\mathbf{b}_{23}\end{array}\\
0\\
0\\
0
\end{array}\right]
\label{eq:trans_example}
\end{multline}
where the 3rd equality uses the boundary condition \cref{dimless_conc_bc}.

After obtaining the concentrations $c_i\left(t\right)$ for each vertex, we evolve edge radius via \cref{dimless-adsorption}.

By appealing to the symmetry of this network, one may reduce the system for the dynamics of network evolution to a system of two non-autonomous nonlinear ordinary differential equations (ODEs), describing the radius evolution $r_{34}\left(t\right)$ and $r_{45}\left(t\right)$ (which by symmetry is equal to $r_{46}(t)$), with a closed solution for the linear shrinkage rate of $r_{13}(t)=r_{23}(t)$. Since our principal concern here is the end state of the network, we do not present or study this system here.

\begin{figure}
    \centering
    \includegraphics[scale = 0.26]{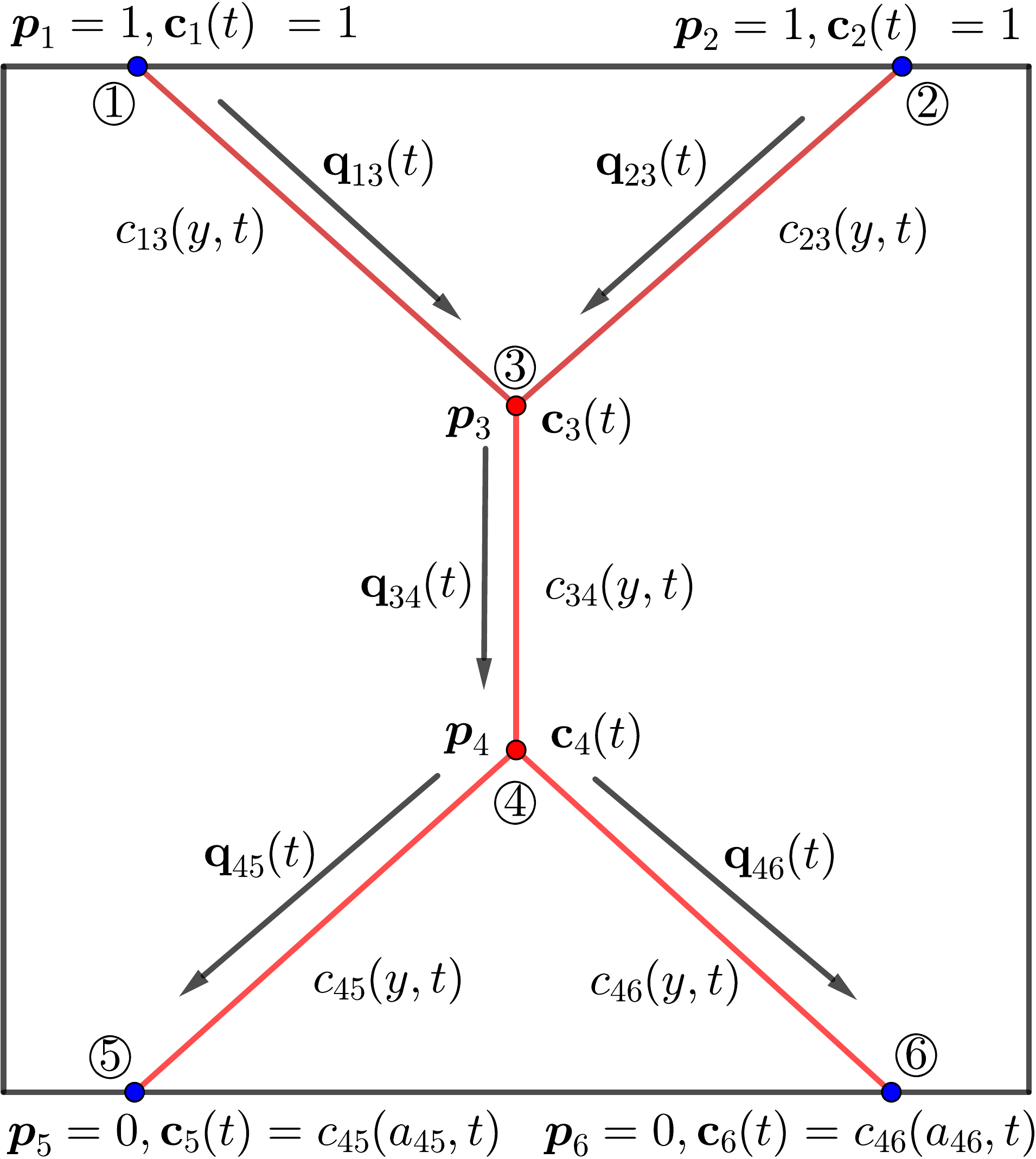}
    \caption{2D schematic of a reflected-Y network with labelled physical quantities presented in \S3.}
    \label{fig:inverted_Y}
\end{figure}

\section*{Acknowledgments}
This work was supported by NSF Grant No. DMS-1615719.

\bibliographystyle{siamplain}
\bibliography{references}
\end{document}